\newtheorem{theorem}{Theorem}[section]
\newtheorem{corollary}[theorem]{Corollary}
\newtheorem{lemma}[theorem]{Lemma}
\theoremstyle{definition}
\DeclarePairedDelimiter\floor{\lfloor}{\rfloor}
\newcommand\frho{\left(\frac{1}{\rho}\right)}
\newcommand\nfrac{n^{1+\frac{1}{\kappa}}}
\newcommand\nk{n^{1/\kappa}}
\newcommand\nko{n^{1/\kappa}-1}
\newcommand\mpi{\mathcal{P}_i}
\newcommand\mpp[1]{\mathcal{P}_{#1}}
\newcommand\congest{{\sf CONGEST}}
\newcommand\congestmo{{\sf CONGEST} model}
\newcommand\localmo{{\sf LOCAL} model}
\newcommand\ize{\floor*{\log \kappa\rho}}
\crefname{lemma}{Lemma}{Lemmas}
\author{Michael Elkin$^1$ and Shaked Matar$^1$}
\date{$^1$Department of Computer Science, Ben-Gurion University of the Negev, Beer-Sheva, Israel.\\
	Email: \texttt{elkinm@cs.bgu.ac.il, matars@post.bgu.ac.il}}
\title{Fast Deterministic Constructions of Linear-Size Spanners and Skeletons}
\begin{document}
	\maketitle
	
	\begin{abstract}

		In the distributed setting, the only existing constructions of \textit{sparse skeletons}, (i.e., subgraphs with $O(n)$ edges) either use randomization or large messages, or require $\Omega(D)$ time, where $D$ is the hop-diameter of the input graph $G$. We devise the first deterministic distributed algorithm in the CONGEST model (i.e., uses small messages) for constructing linear-size  skeletons in time 
		$2^{O(\sqrt{{\log n}\cdot{\log{\log n}}})}$.

		We can also compute a linear-size spanner with stretch $polylog(n)$ in low deterministic polynomial time, i.e., $O(n^\rho)$ for an arbitrarily small constant $\rho >0$, in the CONGEST model. 
		
		Yet another algorithm that we devise runs in $O({\log n})^{\kappa-1}$ time, for a parameter $\kappa=1,2,\dots,$ and constructs an $O({\log n})^{\kappa-1}$ spanner with $O(\nfrac)$ edges. 
		
		All our distributed algorithms are lightweight from the computational perspective, i.e., none of them employs any heavy computations.

	\end{abstract}
	
\newpage

\section{Introduction}

We study the problem of \textit{deterministically} constructing sparse \textit{spanners} in the distributed \congestmo. In this model, vertices of an input $n$-vertex unweighted, undirected connected graph $G=(V,E)$ host autonomous processors, which communicate with one another over edges of $G$ in synchronous rounds via short messages. Each message is allowed to contain $O(1)$ vertex IDs.\footnote{We assume that the vertices are equipped with distinct identity numbers, IDs, in the range $\{1,2,\dots,O(n)\}$. Our algorithms apply also when IDs are in a larger polynomial in $n$ range.} The \textit{running time} of a distributed algorithm is defined as the number of rounds of distributed communication.

A subgraph $G'=(V,H)$ is said to be an $\alpha-spanner$ of $G$, for a parameter $\alpha\geq 1$, if for every pair $u,v\in V$ of vertices, $d_{G'}(u,v)\leq \alpha \cdot d_G(u,v)$, where $d_G$ (respectively, $d_{G'}$) stands for the distance function in the graph $G$ (respectively, in the subgraph $G'$). The parameter $\alpha$ is called the \textit{stretch} of the spanner $G'$. 
Spanners are a focus of intensive research in the context of distributed algorithms 
\cite{Awerbuch85,PelegS89,cohen93,ElkinP01,Elkin01,ThorupZ01_distance_oracles,ElkinZ06,baswana2007simple,Elkin08a,DerbelGPV09,Pettie10,DerbelMZ10,ElkinN17,GrossmanP17,GhaffariK18}.

Existentially it is well-known that for every parameter $\kappa=1,2,3,\dots$, for every $n$-vertex graph $G=(V,E)$, a $(2\kappa-1)$-spanner with $O(\nfrac)$ edges exists \cite{Awerbuch85,PelegS89,AlthoferDDJS93}.
For $\kappa={\log n}$, the size of bound of these spanners is $O(n)$.
Such a spanner (with $O(n)$ edges) is also called a \textit{sparse skeleton} of $G$ (even if the stretch requirement does not hold).

The problem of constructing sparse skeletons in the distributed setting also has a long history. Dubhashi et al, \cite{DubhashiMPRS05} showed that in the \localmo\ (i.e., when messages of unbounded size are allowed), a sparse skeleton can be computed in deterministic $O({\log n})$ time, and also that a linear-size $O({\log n})$ spanner can be computed in randomized $O({\log^3 n})$ time, and in deterministic $2^{O(\sqrt{\log n})}$ time. 
Their algorithms place however a heavy computational burden on the processors that execute them.  
Derbel et al. \cite{DerbelGP07,DerbelG08,DerbelGPV09} devised many additional constructions of sparse spanners and skeletons, all in the \localmo.
The sparsest their spanners can achieve is $O(n\cdot {\log{\log n}})$, and the running time of their algorithm that achieves this level of sparsity is also randomized $polylog(n)$ and deterministic $2^{O(\sqrt{{\log n}})}$.

Pettie \cite{Pettie10} devised the first randomized algorithm in the \congestmo \ that constructs linear-size skeletons in time $2^{O({\log^*n})}\cdot {\log n}$. His skeleton is, in fact, also a $2^{O({\log^*n})}\cdot{\log n}$-spanner. These bounds were dramatically improved by \cite{MillerPX13} that devised a construction of  $(4\kappa-1)$-spanner with expected $O(\nfrac)$ edges, and by \cite{ElkinN17}, whose algorithm constructs $(2\kappa-1)$ spanner with expected $\nfrac$ edges,  for all $\kappa=1,2,\dots$. Both these randomized algorithms require $O(\kappa)$ time, and for $\kappa={\log n}$ they produce $O({\log n})$-spanners (and thus, skeletons) of linear size. 

Consequently the focus shifted to \textit{deterministic} algorithms in the \congestmo. Remarkably, no such algorithm for constructing \textit{linear-size skeletons} with running time $o(n)$ are known. There is an algorithm of Derbel et al. \cite{DerbelMZ10} that computes $(2\kappa-1)$-spanners of size $O(\nfrac)$ in time $O(n^{1-\frac{1}{\kappa}})$, but for linear size ($\kappa ={\log n}$), its running time is $O(n)$. There is an algorithm of Grossman and Parter \cite{GrossmanP17} that produces $(2\kappa-1)$-spanner with $O(\kappa\nfrac)$ edges in $O(2^\kappa\cdot n^{1/2-1/k})$, but the sparsest skeletons it can produce have size $\Omega(n\cdot {\log n})$. 
An algorithm of Barenboim et al. \cite{BarenboimEG15} constructs $O({\log n})^{\kappa-1}$-spanner of size $O(\nfrac)$ in time $O(\nk\cdot ({\log n})^{\kappa-1}))$. However, the sparsest spanners it can produce in sublinear time have size $\Omega(n\cdot polylog(n))$. A recent result of Ghaffari and Kuhn \cite{GhaffariK18} produces $(2\kappa-1)$-spanners with size $O(\kappa\nfrac\cdot{\log n})$ in deterministic $2^{O(\sqrt{\log n })}$ time. 
However, the sparsest spanners that this construction can produce contain $\Omega (n{\log^2 n})$ edges. 
The current authors recently devised a construction of near-additive $(1+\epsilon,\beta)$-spanners\footnote{A near additive spanner $G'=(V,E')$ for a graph $G=(V,E)$ is a subgraph of $G$ such that for every pair $u,v$ of vertices $d_{G'}(u,v)\leq (1+\epsilon)d_G(u,v)+\beta$.} with $O(\beta\nfrac)$ edges. 
For a parameter $\rho>0$, its running time is $O(\beta n^\rho)$, where 
$\beta =O\left(\frac{\log \kappa}{\rho\epsilon}\right)^{{\log \kappa}+\rho^{-1}+O(1)}$.
 By setting $\kappa={\log n}$ and 
 $\rho = \sqrt{\frac{{\log {\log n}}}{\log n}}$ one can get a spanner of size 
	$O( ({\log n})^{\log^{(3)}n})$ in 
	$2^{O(\sqrt{({\log n}){\log{\log n}}})}$ time. This is the sparsest these spanners can get.

	Hence if one wants to deterministically construct a \textit{linear-size} skeleton in the \congestmo, she has either to resort to a spanning tree construction, which would require $\Omega(D)$ time, where $D$ is the hop-diameter of the graph, or to the algorithm of Derbel et al. \cite{DerbelMZ10}, which requires $O(n)$ time.
	Moreover,  the fastest currently known algorithm  for getting skeletons with $o(n{\log^{2}n})$ edges requires $\Omega(n^{1/2})$ time \cite{GrossmanP17}.

In this paper we devise an algorithm with running time $2^{O(\sqrt{{\log n}{\log{\log n}}})}$ for constructing $O(n)$-size skeletons. More generally, for a  parameter $\kappa=1,2,\dots$, our algorithm constructs a $poly(\kappa)$-spanner in low polynomial time. (By "low polynomial time" we mean $n^\rho$, for an arbitrarily small  constant $\rho>0$.) Specifically, for any pair of parameters $\kappa=1,2,\dots,$ and $\rho\geq 1/\kappa$, 
our algorithm constructs a 
$(\kappa\rho)^{\log\frac{1}{\rho}}\cdot O\frho^{\frac{1}{\rho}+O(1)}$-spanner in 
$(\kappa\rho)^{\log\frac{1}{\rho}}\cdot O\frho^{\frac{1}{\rho}+O(1)}$ 
time.

Our second result is that we devise a $O(\log n)^{\kappa-1}$-time algorithm for constructing an  
$O(\log n)^{\kappa-1}$-spanner with $O(\nfrac)$ edges, improving the previous result by \cite{BarenboimEG15}, where an algorithm that constructs spanners with similar parameters requires $O(\nk\cdot ({\log n})^{\kappa-1})$ time.

	\subsection{Technical Overview}
	
	Our algorithm for constructing an $O({\log n})^{\kappa-1}$-spanner in $O(\nfrac)$ time bears some similarity to the algorithm of Barenboim et al. \cite{BarenboimEG15}. The latter algorithm, like its predecessor network decomposition  algorithm by Awerbuch et al. \cite{AwerbuchGLP89}, uses ruling sets to create superclusters around vertices of high degree. Vertices of low degree then insert all edges incident on them to the spanner. Then the algorithm \cite{BarenboimEG15} iterates this step on the cluster graph. Clusters of high degree become superclustered, while each low degree cluster $C$ adds to the spanner one edge $(u, v)$ such that $u \in C$, $v \in C'$ for each cluster $C'$ adjacent to $C$. (In the context of spanners, this approach was pioneered in the algorithm of \cite{ElkinP01} for constructing near-additive spanners.)
	
	To identify which clusters $C$ have many (specifically at least $\nk$) adjacent clusters, there is a pipelined convergecast conducted in parallel in all clusters. This convergecast incurs congestion of $\nk$, and thus requires $O(Rad(C)\cdot \nk)$, where $Rad(C)$ is the radius of a cluster $C$. 
	
	In our current algorithm we replace this convergecast by a local condition. In every cluster $C$ every vertex $v \in C$ checks if it has at least $\nk$ adjacent clusters. If for at least one $v \in C$ this condition holds, the cluster $C$ will be superclustered. For every cluster $C$ that is not superclustered, every vertex $v \in C$ will add to the spanner one edge for every adjacent cluster $C'$ to $v$. 
	
	This algorithm can be implemented without any congestion and thus its running time is polylogarithmic in $n$. It is not hard to see that the size of the resulting spanner is still $O(\nfrac)$. 
	To our knowledge this is the \textit{first} known deterministic algorithm that runs in polylogarithmic time in the \congestmo\ that produces sparse spanners with \textit{any meaningful} stretch guarantee. 
	
	Our algorithm for constructing $poly(\kappa)$-spanner with $O(\nfrac)$ edges in low polynomial time is also closely related to the aforementioned algorithm of Barenboim et al. \cite{BarenboimEG15}. There are a few important changes though. First, while the algorithm \cite{BarenboimEG15} uses the same degree threshold $\nk$ on all its iterations throughout the algorithm, our current algorithm employs the degree sequence $\nk, n^{2/\kappa}, n^{4/\kappa}, \ldots$ for all phases $i$ such that $\frac{2^i}{\kappa}\leq \rho$, and $n^\rho$ for all remaining phases, i.e., phases $i$ such that $\frac{2^i}{\kappa} > \rho$. (Each degree threshold determines which clusters will be superclustered on the current iteration and which, low degree, clusters will add edges to the spanner.) This degree sequence was used by Elkin and Neiman \cite{ElkinN17} and by the current authors \cite{ElkinMatar} for constructing sparse near-additive spanners while here we use it for multiplicative spanners. It is a refinement of the degree sequence used in the algorithm of \cite{ElkinP01}.
	
	The second change in comparison to \cite{BarenboimEG15} is that here we use ruling sets from \cite{sew, KuhnMW18}, rather than those of \cite{AwerbuchGLP89}. This enables us to achieve constant $\kappa^{O({\log 1/\rho})}\cdot O\frho^{\frho+O(1)}$ (as long as $\kappa$ and $\rho$ are constant) stretch,  while still keeping the running time in check.
	
	Finally, algorithms of \cite{ElkinN17, ElkinMatar} as well as their precursor \cite{ElkinP01}, all construct spanners of size at least $\Omega(n{\log\log n})$, as each of the ${\log \log n}$ phases (or iterations) of these algorithms can potentially add $\Omega(n)$ edges. Here we show that in the case of \textit{multiplicative} spanners, as opposed to near-additive ones, constructed in \cite{ElkinP01, ElkinN17, ElkinMatar}, the total number of edges added on all phases combined can be bounded by $O(n)$. This is achieved by a careful accounting of edges added to the spanner.

	\subsection{Outline}
	Section \ref{sec preliminaries} provides necessary definitions for understanding this paper.
	 Our first spanners construction is described in Section \ref{sec polylog time spanner construction}. Our second spanners construction, which builds upon the first construction, is given in Section \ref{sec ultra-sparse}.
	
	\subsection{Preliminaries }\label{sec preliminaries}

	Given a graph $G= (V,E)$, a set of vertices $W\subseteq V$ and parameters $\alpha,\beta \geq 0$, a set of vertices $A\subseteq W$ is said to be
	a \textit{$(\alpha,\beta)$-ruling set} for $W$ if for every pair of vertices $u,v\in A$, the distance between them in $G$ is at least $\alpha$, and for every $u\in W$ there exists a representative $v\in A$ such that the distance between $u,v$ is at most $\beta$.

	Throughout this paper, we denote by $r_C$ the center of the cluster $C$ and say that $C$ is \textit{centered around} $r_C$. For a cluster $C$, define $Rad(C)= max\{d_H(r_C,v)\ | \ v\in C\}$, where $d_H$ is the distance matrix of the spanner $H$. For a set of clusters $P$, define 
	$Rad(P)=max\{Rad(C)\ | \ C\in P\}$.

\section{Polylogarithmic Time  Construction}\label{sec polylog time spanner construction}
	
	In this section, we devise a construction of 
	$O\left(	{\log n}\right)^{\kappa-1} $ spanner of size at most $\nfrac$. This construction requires 
	$ O\left({\log n}\right)^{\kappa-1}$ deterministic time in the \congestmo. For a constant $\kappa$, this yields a polylogarithmic stretch spanner, in time that is polylogarithmic in $n$.

	Section \ref{sec fa the const} contains a concise description of the algorithm. The technical details of the construction are discussed in Sections \ref{sec superclustering fast} and \ref{sec intercon fast}. Finally, the properties of the resulting spanner and the construction are analyzed in Section  \ref{sec fa analysis of const}.

	\subsection{The Construction}
	\label{sec fa the const}
	Our algorithm initializes $H$ as an empty spanner, and proceeds for $\ell+1$ phases. The parameter $\ell$ will be specified later. The input for each phase $i\in [0,\ell]$ is a collection of clusters $\mathcal{P}_i$. The input to phase $0$ is the partition of $V$ into singleton clusters, $\mpp{0} = \{ \{v\} \ |\ v\in V\}$.
	We say that a vertex $v\in V$ is \textit{popular} if it has neighbors from at least $\nko$ distinct clusters. A cluster $C\in \mathcal{P}_i$ is said to be popular if it contains a popular vertex.

	 Each phase $i$ consists of two steps. In the \textit{superclustering} step, popular clusters are merged into larger clusters. The new collection of large clusters will be the input for the next phase.
	 In the \textit{interconnection} step, clusters in $\mathcal{P}_i$ that have not been superclustered in this phase are interconnected to their neighboring clusters in $\mathcal{P}_i$.
	 
	 In the last phase $\ell$, we will ensure that the size of $\mathcal{P}_\ell$ is small enough, such that we can simply interconnect every pair of neighboring clusters, and therefore we will not form superclusters. Set  $\ell = \kappa-1$.

	\subsubsection{Superclustering}\label{sec superclustering fast}
	In this section we describe the superclustering step of phase $i$, for all $i\in [0,\ell-1]$. 
	
	Let $W_i$ be the set of popular clusters in $\mathcal{P}_i$. 
	First, for each cluster $C\in \mathcal{P}_i$, its cluster center $r_C$ broadcasts its ID to all vertices in $C$. 
	We then proceed to detect popular clusters. Each vertex $u\in C$ notifies all its neighbors that it belongs to the cluster centered around $r_C$. Each vertex $v$ that belongs to some cluster $C'\in \mathcal{P}_i$ now knows the IDs of all clusters it is adjacent to. If it has at least $\nko$ neighboring clusters (excluding $C'$ itself), it decides that it is popular, and informs its cluster center $r_{C'}$ that $C'$ is popular. Observe that after this communication terminates, every cluster center knows whether its cluster is popular or not, i.e., if the cluster belongs to $W_i$ or not. 
	
	We construct the virtual popular cluster graph $G'_i=(V',E')$, where the set of supervertices $V'=\mathcal{P}_i$ and $E'$ contains edges from each popular cluster to its neighboring clusters (whether they are popular or not). Define $\delta =2{\log n}$. We simulate the algorithm of Awerbuch at al. \cite{AwerbuchGLP89} on the graph $G'_i$ to construct $Q_i$, a $(3,\delta)$-ruling set for $W_i$. (Note that  $Q_i\subseteq W_i$.) See Section \ref{sec running time fast} below for the argument that this simulation can be conducted efficiently. Note that $Q_i$ is $3$-separated and $\delta$-ruling for the set $W_i$ with respect to the distance in $G'_i$.
	
	A BFS exploration is then simulated on $G'_i$ from all supervertices of $Q_i$ to depth $\delta$. As a result, a ruling forest $F'_i$ is constructed. Each supervertex $C'\in V'$ that is spanned by a tree in $F'_i$ that originated in a cluster $C\in Q_i$, now becomes superclustered into the cluster $C$. The cluster $C'$ now 
	decides which cluster $C'_{pred}$ is its predecessor with respect to $F'_i$. 
	Observe that the cluster $C'_{pred}$ discovered $C'$ because there is an edge $(C'_{pred},C')$ that was traversed in the BFS exploration, and it was the first edge to reach $C'$. This edge was simulated by an original edge $(u',u)\in E$ such that $u\in C'$ and $u'\in C'_{pred}$. The cluster $C'$ chooses to add the edge $(u',u)$ to the spanner $H$.
	 This concludes the description of the superclustering step (see Figure \ref{fig superclustering charge} for an illustration). 
	
	\begin{figure}
		\begin{center}
			\includegraphics[scale=0.2]{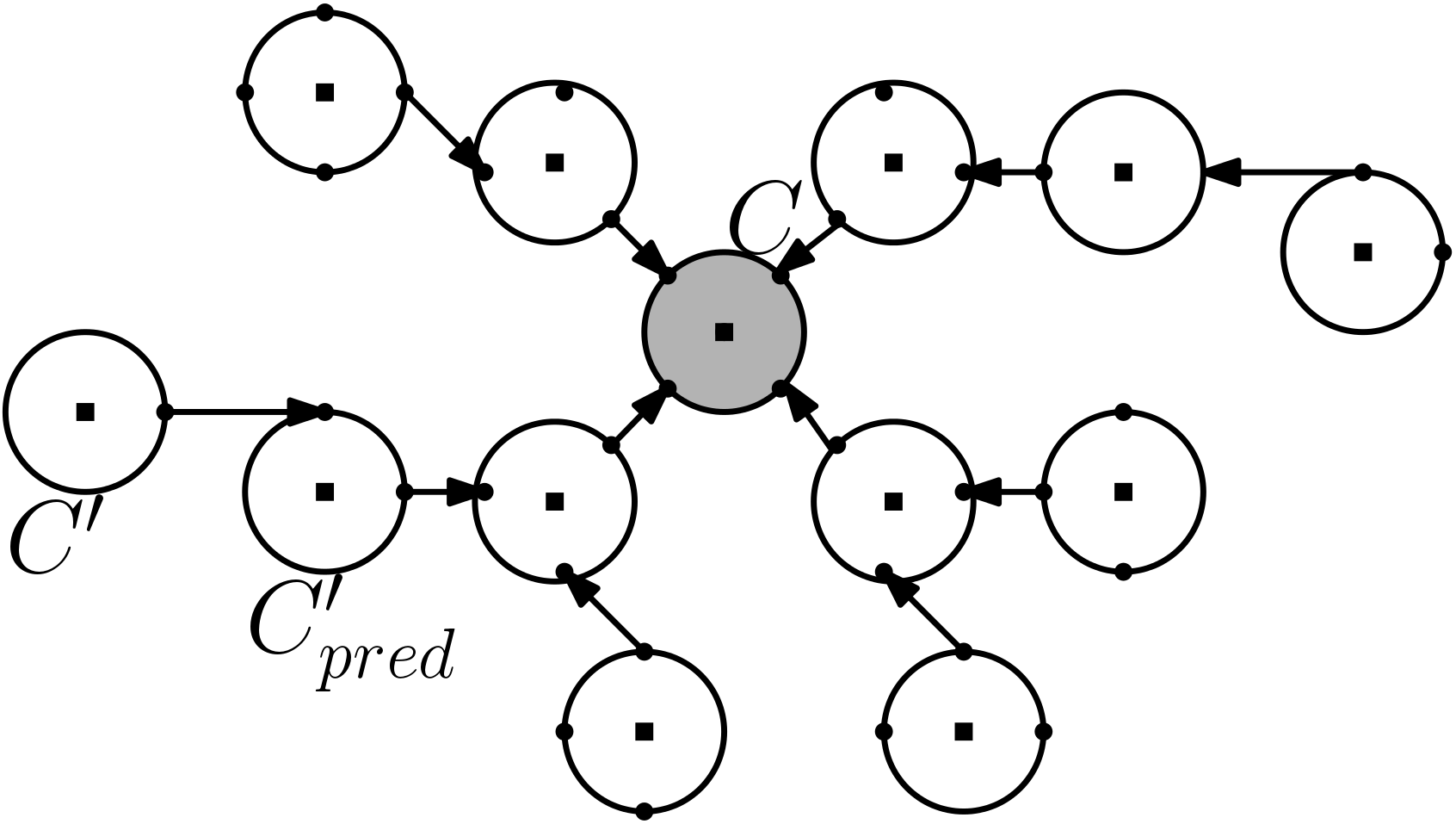}
			\caption{The superclustering edges. The gray circle represents a cluster $C\in Q_i$. The white circles represent the clusters $C'\in \mpi\setminus Q_i$ that are clustered into the supercluster around $C$. The edges represent the superclustering edges. The arrow from a cluster $C'$ to $C'_{pred}$ represents that $C'$ is charged for the edge between $C',C'_{pred}$. }
			\label{fig superclustering charge}
		\end{center}	
	\end{figure}
	
	We will show that all popular clusters w.r.t. to $\mathcal{P}_i$ are superclustered in phase $i$. 
	Recall that every center of a popular cluster knows that it is popular, before the construction of the ruling set $Q_i$.
	
	\begin{lemma}
		\label{lemma popular are clustered fst}
		For every phase $i\in [0,\ell-1]$, all popular clusters in $\mathcal{P}_i$ are superclustered into clusters of $\mathcal{P}_{i+1}. $
	\end{lemma}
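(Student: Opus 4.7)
The plan is to derive the lemma directly from the ruling set guarantees combined with the BFS-depth used to build the ruling forest $F'_i$. Fix an arbitrary popular cluster $C\in\mathcal{P}_i$; by definition $C\in W_i$. I need to exhibit some cluster $C^\star\in Q_i$ whose BFS tree in $F'_i$ will contain $C$, so that $C$ becomes superclustered into the supercluster of $\mathcal{P}_{i+1}$ centered at $C^\star$.

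First I would invoke the ruling set property of $Q_i$. Since $Q_i$ is a $(3,\delta)$-ruling set for $W_i$ with respect to distances in the virtual graph $G'_i$, and since $C\in W_i$, there exists some $C^\star\in Q_i$ with $d_{G'_i}(C,C^\star)\leq\delta$. Let $P=(C^\star=C_0,C_1,\dots,C_t=C)$ be a shortest such path in $G'_i$, with $t\leq\delta$. Each edge of $P$ lies in $E'$, so each edge of $P$ was available for traversal by the simulated BFS on $G'_i$ that builds $F'_i$ from the sources $Q_i$ to depth $\delta$.

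Next I would argue that the multi-source BFS from $Q_i$ assigns $C$ to some tree in $F'_i$. The simulated BFS explores, from each source in $Q_i$, all supervertices reachable in at most $\delta$ hops in $G'_i$. Since $d_{G'_i}(C,Q_i)\leq d_{G'_i}(C,C^\star)\leq\delta$, the cluster $C$ is reached by the exploration, hence $C$ is spanned by some tree of $F'_i$ whose root lies in $Q_i$. By the description of the superclustering step, every cluster spanned by a tree of $F'_i$ becomes a member of a supercluster of $\mathcal{P}_{i+1}$ (with $C$ electing a predecessor $C_{pred}$ along its BFS tree edge, which is implemented by an original edge added to $H$). Thus $C$ is superclustered, as required.

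No serious obstacle stands in the way: the lemma is essentially a restatement of the reachability guarantee of a $(3,\delta)$-ruling set combined with the fact that the BFS depth matches the ruling parameter $\delta$. The only point that deserves care is to make sure the path $P$ lies inside $G'_i$ and not merely in some auxiliary graph, i.e.\ that the edges of $P$ are genuinely traversable by the simulated BFS. This holds because each edge of $P$ is incident to at least one popular cluster (indeed both endpoints of $P$ are popular, and every intermediate edge of $P$ is an edge of $E'$ by construction of $Q_i$ as a ruling set inside $G'_i$), and such edges are precisely those included in $E'$.
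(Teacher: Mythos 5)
Your proof is correct and follows essentially the same route as the paper: invoke the $(3,\delta)$-ruling property of $Q_i$ in $G'_i$ to obtain a source $C^\star\in Q_i$ within distance $\delta$, then observe that the depth-$\delta$ BFS from $Q_i$ must reach $C$. The extra worry in your final paragraph about whether $P$ lies in $G'_i$ is unnecessary (and slightly circular), since the ruling set distance guarantee is already stated with respect to $d_{G'_i}$, so any witnessing shortest path is automatically a path in $G'_i$; but this does not affect correctness.
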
 

	\begin{proof}
		Let $C$ be a popular cluster. The set $Q_i$ is a $(3,\delta)$-ruling set for $W_i$ in the virtual graph $G'_i$. Therefore, there exists a supervertex $C'\in Q_i$ with distance at most $\delta$ from $C$ in the virtual graph $G'_i$. Then, the BFS exploration that originated from all supervertices of $Q_i$ to depth $\delta$ in $G'_i$ discovered $C$, and so it is superclustered into a cluster of $\mathcal{P}_{i+1}$.
	\end{proof}

	We will now bound the radius of the clusters collection $\mathcal{P}_i$. Define recursively 
	\begin{equation}
	\label{eq define ri gen}
	R_0=0 \ 	\textit{ and } \ R_{i+1} = (2\delta+1)R_i+\delta.
	\end{equation}

	\begin{lemma}\label{lemma single tree}
		Let $i\in [0,\ell]$ and let $C$ be a cluster of $\mathcal{P}_i$. 
		At the beginning of phase $i$, the spanner $H$ contains a spanning tree $T_{C}$
		such that for every vertex $u\in V(C)$, there is a path in $T_{{C}}$ from $u$ to the cluster center $r_{C}$, of length at most $R_{i}$, that contains only vertices from $V({C})$. 
	\end{lemma}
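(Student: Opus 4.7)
The plan is to prove the statement by induction on the phase number $i$.

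For the base case $i=0$, the input partition $\mathcal{P}_0$ consists of singleton clusters $\{v\}$, so the spanning tree $T_C$ is just the single vertex $r_C = v$, and the claim holds trivially with $R_0 = 0$.

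For the inductive step, assume the claim holds at the beginning of phase $i$. I will argue that at the beginning of phase $i+1$, every cluster $\tilde{C} \in \mathcal{P}_{i+1}$ satisfies the claim with radius bound $R_{i+1}$. By the construction in Section \ref{sec superclustering fast}, every $\tilde{C} \in \mathcal{P}_{i+1}$ is built around some cluster $C \in Q_i \subseteq \mathcal{P}_i$, and consists of $C$ together with all clusters $C' \in \mathcal{P}_i$ that are spanned by the BFS tree in $F'_i$ rooted at $C$ (to depth at most $\delta$). For each such $C'$, let $C' = C_0 \to C_1 \to \dots \to C_t = C$ be the path of predecessors in $F'_i$, where $t \leq \delta$. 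For each $j \in [0, t-1]$, the superclustering step added to the spanner $H$ an edge $(u_j, u'_j)$ with $u_j \in C_j$ and $u'_j \in C_{j+1}$, so that $u'_j$ is the predecessor-side endpoint. I would define $T_{\tilde{C}}$ as the union of the previously-existing trees $T_{C_j}$ (for all clusters $C_j$ in $\tilde C$) and these newly added superclustering edges; since the $T_{C_j}$'s are disjoint spanning trees of vertex-disjoint clusters and each new edge joins two of them along the $F'_i$-tree structure, the resulting subgraph of $H$ is indeed a spanning tree of $\tilde C$.

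Next I would bound the path length from an arbitrary $v \in V(\tilde C)$ to $r_{\tilde C} = r_C$. Suppose $v \in V(C_0)$. Inside $T_{C_0}$, the vertex $v$ reaches $r_{C_0}$ in at most $R_i$ steps, and $r_{C_0}$ reaches $u_0$ in at most $R_i$ steps, using only vertices of $V(C_0)$. Then we cross the edge $(u_0, u'_0)$ to enter $V(C_1)$, travel at most $R_i$ steps in $T_{C_1}$ from $u'_0$ to $r_{C_1}$, and then at most $R_i$ more to reach $u_1$, and so on. Each transition between consecutive clusters $C_j \to C_{j+1}$ thus contributes at most $2R_i + 1$ edges (two intra-cluster traversals plus the superclustering edge), except the very first leg costs only $R_i$ (from $v$ to $r_{C_0}$ rather than a full traversal). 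Summing, the total length is at most $R_i + t(2R_i + 1) \leq R_i + \delta(2R_i + 1) = (2\delta+1)R_i + \delta = R_{i+1}$, matching (\ref{eq define ri gen}). All vertices on the path lie in $\bigcup_j V(C_j) \subseteq V(\tilde C)$, as required.

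The main obstacle is essentially bookkeeping: one must be careful that travelling from the cluster center $r_{C_j}$ to the boundary vertex $u_j$ (the endpoint of the superclustering edge) is not free and costs another $R_i$ hops inside $T_{C_j}$, which is precisely why the recursion has coefficient $(2\delta + 1)$ rather than $(\delta + 1)$. Once this is handled, the induction closes directly, and Lemma \ref{lemma popular are clustered fst} is not needed here (it will be used elsewhere to argue that all popular clusters are absorbed).
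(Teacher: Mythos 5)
Your proof is correct and follows essentially the same inductive argument as the paper: both proofs trace the path from a vertex $v$ through the chain of predecessor clusters in the BFS forest $F'_i$ back to the root cluster $C$, charging each inter-cluster hop $2R_i + 1$ (two intra-cluster traversals via the centers plus the superclustering edge), yielding the same recurrence $R_{i+1} = (2\delta+1)R_i + \delta$. The only cosmetic difference is the direction of indexing along the chain (you go from $C'$ up to the root $C$; the paper goes from the root $C$ out to $C'$), which does not change the bound.
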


\begin{proof}
	The proof is by induction on the index of the phase $i$.
	For $i=0$, all clusters in $\mathcal{P}_0$ are singletons, and so the claim is trivial. 
	We will assume that the claim holds for phase $i\geq 0$, and prove it holds for $i+1$.
	
	Let $\widehat{C}$ be a cluster in $\mathcal{P}_{i+1}$. Observe that the cluster $\widehat{C}$ contains a cluster $C\in Q_i$, and some other clusters $C'\in \mathcal{P}_i$, that have been reached by the BFS exploration in $G'_i$ that originated from the supervertex $C$. 
	Denote $T'_{C}=(V(\widehat{C}),E'_{\widehat{C}})$ the tree constructed by the BFS exploration that originated in the supervertex $C$. 
	Let $u\in V$ be a vertex in $V(\widehat{C})$.
	
	\textbf{Case 1:} If $u\in V(C)$, then by the induction hypothesis the spanner $H$ contains a spanning tree $T_C$ such that there is a path in $T_C$ from $u$ to the cluster center $r_C$, of length at most 
	$R_i$ that contains only vertices from $V(C)$. Since $T_C$ is in $T_{\widehat{C}}$, there is a path in $T_{\widehat{C}}$ from $u$ to the cluster center $r_C$, of length at most $R_i$ that contains only vertices from $V(C)$. Since $\delta>0$, $R_i\leq R_{i+1}$, and the claim holds. 
	
	\textbf{Case 2:} If $u\notin V(C)$, then $u$ belongs to some cluster $C'\in \mathcal{P}_i$, $C'\neq C$, that was superclustered into $\widehat{C}$. 
	Let $P'= (C=C_0,C_1,\dots, C_q=C')$ be the path in $T'_C$ from $C$ to $C'$. Denote $v_0,v_1,\dots,v_q$ the respective centers of the clusters in $P'$. 
	For each pair $C_{j-1},C_{j}$ for $j\in [q]$, we know that $C_{j-1}$ is the predecessor of $C_{j}$ w.r.t. to $F'_i$. Then, the center $v_{j}$ added to $H$ an edge $(y_{j-1},x_{j})$ such that $y_{j-1}\in C_{j-1}$ and $x_{j}\in C_{j}$. 
	Moreover, by the induction hypothesis, for every $j\in [0,q]$, there is a path in $H$ from the center $v_j$ to all vertices in $C_j$ of length at most $R_i$, that contains only vertices from $C_j$. 
	
	Let $P$ be the path in $H$ obtained by replacing the edges in $P'$ with edges from $H$ as follows. Define $v_0=x_0$ and $u=y_q$. The path $P$ begins with the path of length at most $R_i$ from $r_C$ to $y_0$ in the spanner $H$. Then, for each $j\in [q]$, the edge $(C_{j-1},C_{j})$ is replaced with the edge $(y_{j-1},x_{j})$ and the path in $H$, of length at most $2R_i$, from $x_j$ to $y_j$. See Figure \ref{fig path p} for an illustration. Observe that for all $j\in [0,q]$, the vertices of $C_j$ also belong to $\widehat{C}$. Thus the path $P$ is a path from $v_0$ to $u$, of length at most $R_i+q+2q\cdot R_i$, that contains only edges from $\widehat{C}$. Since $q\leq \delta$, we obtain 	
	$$ d_H(r_C,u) \leq R_i + 2R_i \cdot \delta + \delta = (2\delta+1)R_i+\delta = R_{i+1}. $$
\end{proof}

\begin{figure}
	\begin{center}
		\includegraphics[scale=0.3 ]{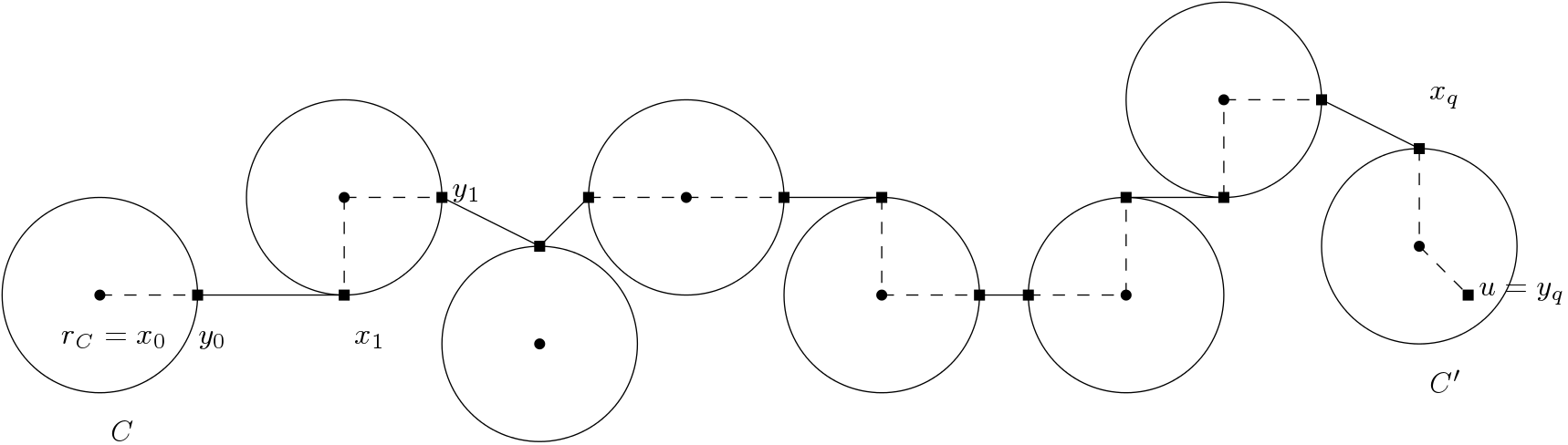} 
	\end{center}
	\caption{The path between a vertex $u\in C' $, such that $C'$ has been superclustered into $\widehat{C}$, to the cluster center $r_C = r_{\widehat{C}}$. In the figure, the circles represent clusters on the path $P'$ from $C$ to $C'$ and the round vertices represent their respective centers. The square vertices represent vertices in the clusters that are not necessarily cluster centers. The solid line represent edges between clusters. The dashed lines represent paths within clusters. }
	\label{fig path p}
\end{figure}
Observe that Lemma \ref{lemma single tree} implies that for all $i\in [0,\ell]$, we have:

\begin{equation}
\label{eq ri bouns pi gen}
Rad(\mathcal{P}_i)\leq R_i
\end{equation}

\subsubsection{Interconnection}\label{sec intercon fast}
	
We now discuss the details of the execution of the interconnection step of a phase $i\in [ 0, \ell]$.

Denote by $U_i$ the set of clusters of $\mathcal{P}_i$ which were not superclustered into clusters of $\widehat{\mpi}$. For phase $\ell$, the superclustering step is skipped. Therefore, we set $U_\ell = \mathcal{P}_\ell$. 

Recall that a vertex $v\in V$ and a cluster $C$ are said to be neighbors if there exists an edge $(v,u)\in E$ such that $u\in C$. 
Let $v\in C$ be a vertex such that $C\in U_i$. In the interconnection step of phase $i$, the vertex $v$ will add to the spanner $H$ an edge to each one of its neighboring clusters. 
If $v$ has multiple neighboring vertices that belong to the same cluster, it will arbitrarily choose one of them to add an edge to. This concludes the description of the interconnection step of phase $i$ (see Figure \ref{fig inter charge} for an illustration).

\begin{figure}
	\begin{center}
		\includegraphics[scale=0.2]{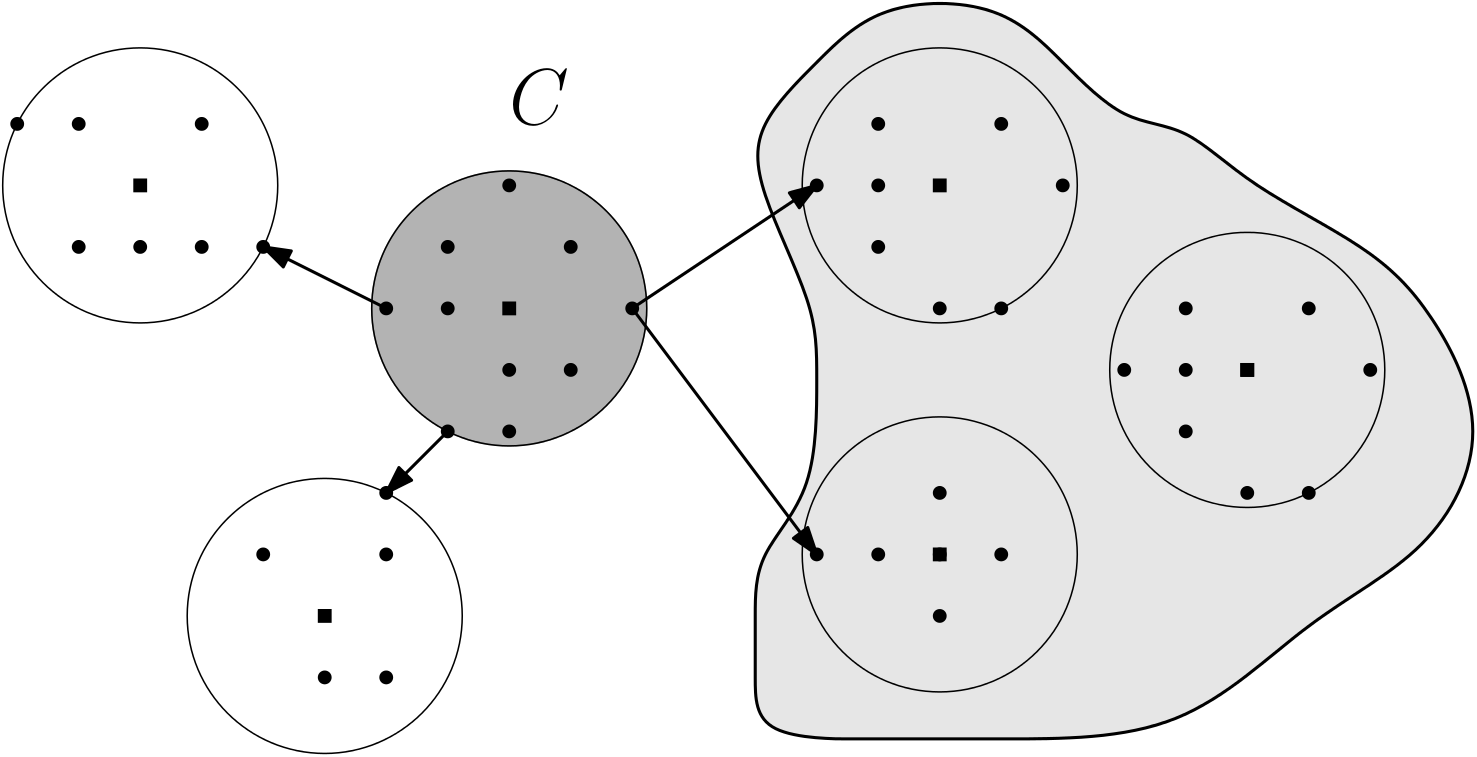}
		\caption{The interconnection edges. The dark gray circle represents a cluster $C\in U_i$. The other circles represent clusters $C'\in \mpi$. The light gray area represent a supercluster that was formed in this phase. The arrowed segments represent the interconnection edges that the center $r_C$ of the cluster $C$ is charged for. }
		\label{fig inter charge}
	\end{center}	
\end{figure}

Denote by $U^{(i)}$ the union of all sets $U_0,U_1,\dots,U_i$, i.e., $U^{(i)} = \bigcup_{j=0}^iU_j$. Observe that the set $U^{(\ell)}$ is a partition of $V$.
 
\subsection{Analysis of the Construction}\label{sec fa analysis of const}
In this section, we provide an analysis of the parameters of the resulting spanner and the running time of the algorithm. The following lemmas provide an explicit bound on $R_\ell$, that will be used in the following sections.

Recall that $R_0=0$ and $R_{i+1}= (2\delta+1)R_i+\delta$ (see \cref{eq define ri gen}). 

\begin{lemma}
		 \label{eq explicit ri gen}
	For every $i\in [0,\ell]$, we have $$R_i = \delta\cdot\sum^{i-1}_{j=0} (2\delta+1)^{j} .$$
\end{lemma}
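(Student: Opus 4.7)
The plan is a straightforward induction on the phase index $i$, unrolling the recurrence defined in equation \eqref{eq define ri gen}.

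For the base case $i=0$, the left-hand side is $R_0=0$ by definition, and the right-hand side is an empty sum ($\sum_{j=0}^{-1}$), which is also $0$, so both sides agree.

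For the inductive step, I would assume the claim for some $i\in [0,\ell-1]$, i.e., $R_i=\delta\sum_{j=0}^{i-1}(2\delta+1)^j$, and substitute into the recurrence:
\begin{equation*}
R_{i+1} = (2\delta+1)R_i + \delta = (2\delta+1)\cdot \delta\sum_{j=0}^{i-1}(2\delta+1)^j + \delta.
\end{equation*}
Distributing the factor $(2\delta+1)$ through the sum shifts each exponent up by one, yielding $\delta\sum_{j=1}^{i}(2\delta+1)^j$. Re-absorbing the trailing $+\delta$ as the $j=0$ term of the geometric sum (since $(2\delta+1)^0=1$) gives $\delta\sum_{j=0}^{i}(2\delta+1)^j$, which is exactly the claimed formula for $R_{i+1}$.

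There is no real obstacle here; the only thing to be careful about is the indexing convention for the empty sum in the base case and the clean re-indexing after multiplication by $(2\delta+1)$. The statement is purely a closed-form solution of a linear first-order recurrence with constant coefficients and constant forcing term, so induction suffices and no additional facts about $\delta$, the graph $G$, or the clustering construction are needed.
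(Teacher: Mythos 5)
Your proof is correct and matches the paper's argument essentially step for step: the same induction on $i$, with the same base case $R_0=0$ (empty sum), and the same inductive unrolling of the recurrence $R_{i+1}=(2\delta+1)R_i+\delta$ followed by re-absorbing the trailing $\delta$ as the $j=0$ term. The only cosmetic difference is that you re-index the shifted sum as $\sum_{j=1}^{i}$ while the paper writes it as $\sum_{j=0}^{i-1}(2\delta+1)^{j+1}$, which is the same thing.
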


\begin{proof}
	The claim is proved by induction on the index of the phase $i$. For $i=0$,	the claim 
	 is trivial as both sides of the equation are equal to $0$.
	 
	 Assume that the claim holds for $i\geq 0$. We will prove the claim for $i+1$. 
	 By definition and the induction hypothesis, we have 
	 
	 \begin{equation*}
	 \begin{array}{cllll}
	 R_{i+1} &=& (2\delta+1)R_i+\delta\\
	 &=& (2\delta+1)\left(\delta\cdot\sum^{i-1}_{j=0} (2\delta+1)^j \right)+\delta\\
	 &=& \left(\delta\cdot\sum^{i-1}_{j=0} (2\delta+1)^{j+1} \right)+\delta\\
	 	 &=& \delta\cdot\sum^{i}_{j=0} (2\delta+1)^{j} \\
	 \end{array}
	 \end{equation*}
	 
\end{proof}

Observe that Lemma \ref{eq explicit ri gen}, implies that $R_i\leq \frac{1}{2}\cdot(2\delta+1)^{i}$ since:

\begin{equation}\label{eq explicit Ri gen}
\begin{array}{cllll}
R_i &=&\delta\cdot\sum^{i-1}_{j=0} (2\delta+1)^{j}\\

&=&\delta\cdot\left[
\frac{(2\delta+1)^{i}-1}{(2\delta+1)-1}
\right]\\
&\leq &\frac{1}{2}\cdot (2\delta+1)^{i}

\end{array}
\end{equation}

Recall that $\delta = 2{\log n}$. Then,

\begin{equation}\label{eq explicit Ri fast}
\begin{array}{cllll}
R_i &\leq &\frac{1}{2}\cdot (4{\log n}+1)^{i}

\end{array}
\end{equation}

\subsubsection{Analysis of the Stretch}

In this section, we analyze the stretch of the resulting spanner. Consider an edge $(u,v)\in E$. Since $U^{(\ell)}$ is a partition of $V$, both $u$ and $v$ belong to clusters of $U^{(\ell)}$. The following two lemmas bound the stretch of the edge $(u,v)$ in the spanner, in the case where $u,v$ belong to the same cluster in $U^{(\ell)}$, and in the case where $u,v$ belong to different clusters of $U^{(\ell)}$, respectively.

\begin{lemma}\label{lemma stretch fa1}
	Let $(u,v)$ be an edge in the original graph $G$, such that $u,v$  belong to the same clusters of $U^{(\ell)}$. Then,
	$$	d_H(u,v) \leq 2R_\ell. $$
\end{lemma}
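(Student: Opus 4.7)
The plan is to reduce the claim directly to \cref{lemma single tree}. Since $U^{(\ell)} = \bigcup_{j=0}^{\ell} U_j$ is a partition of $V$, the assumption that $u$ and $v$ lie in a common cluster of $U^{(\ell)}$ means that there is a single index $i \in [0,\ell]$ and a single cluster $C \in U_i \subseteq \mathcal{P}_i$ with $u, v \in V(C)$.

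Next I would invoke \cref{lemma single tree} for the phase $i$. It guarantees that at the beginning of phase $i$ the spanner $H$ already contains a spanning tree $T_C$ of $C$ such that every vertex of $V(C)$ is within distance at most $R_i$ from the center $r_C$, along a path of $T_C$ that lies entirely inside $V(C)$. In particular, both $d_H(u, r_C) \leq R_i$ and $d_H(v, r_C) \leq R_i$. Since $H$ only grows as the algorithm proceeds, these two bounds remain valid at the end of the construction.

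By the triangle inequality in $H$ (applied to the path going from $u$ up to $r_C$ inside $T_C$ and back down to $v$), we get
\[
d_H(u,v) \;\leq\; d_H(u, r_C) + d_H(r_C, v) \;\leq\; 2 R_i.
\]
Finally, the recursion $R_0 = 0$, $R_{i+1} = (2\delta+1)R_i + \delta$ with $\delta > 0$ shows that the sequence $\{R_i\}$ is monotone non-decreasing, hence $R_i \leq R_\ell$ and $d_H(u,v) \leq 2 R_\ell$, as required.

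There is essentially no serious obstacle here; the whole argument is a direct application of \cref{lemma single tree} plus the triangle inequality. The only small thing to check is the monotonicity of $\{R_i\}$, which is immediate from the recurrence. The more substantial cross-cluster case, where $u$ and $v$ lie in distinct clusters of $U^{(\ell)}$, is the one handled in the subsequent lemma.
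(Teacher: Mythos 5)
Your proof is correct and follows exactly the same route as the paper's: apply Lemma~\ref{lemma single tree} to the common cluster $C\in U_i$ to bound $d_H(u,r_C)$ and $d_H(v,r_C)$ by $R_i$, then use the triangle inequality and the monotonicity of the $R_i$ sequence. The only difference is that you spell out the monotonicity and the partition argument more explicitly, which the paper leaves implicit.
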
	

\begin{figure}
	\begin{center}

		\includegraphics[scale=0.1]{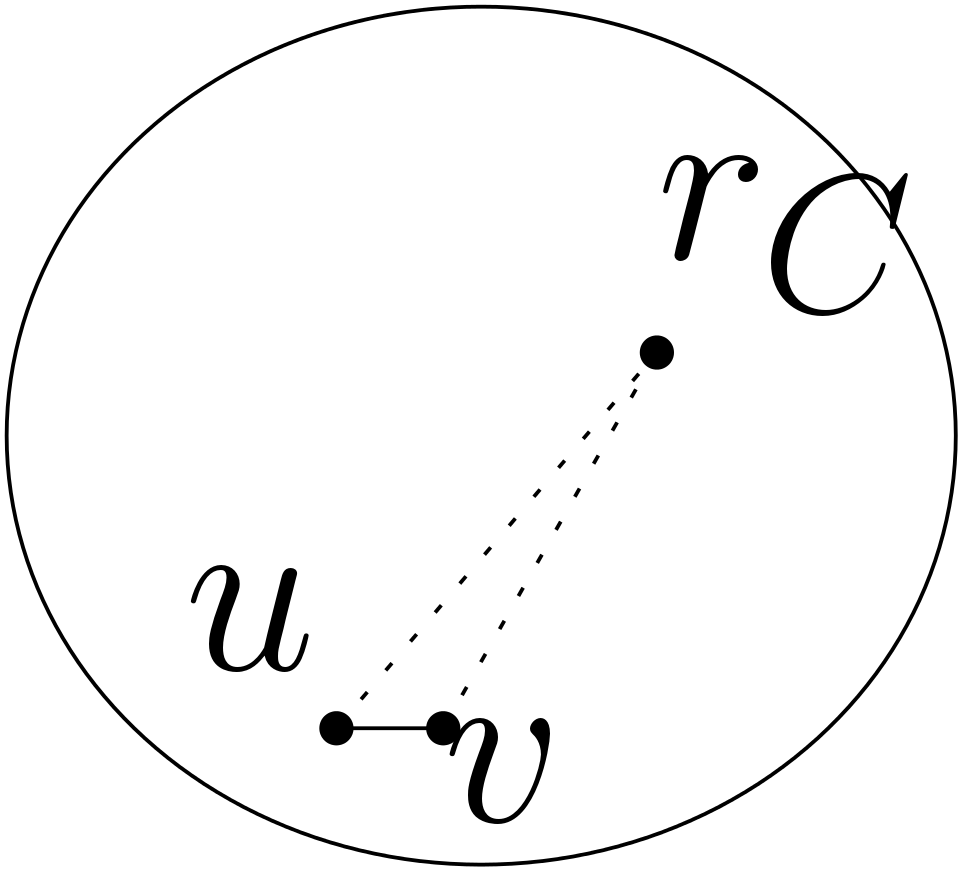}
		\caption{The path from $u$ to $v$ in $H$, if both $u,v$  belong to the cluster $C$, centered around $r_C$. The solid line represents the edge in $G$ between $u,v$. The dashed line represents the paths in $H$ from the cluster center $r_C$ to the vertices $u,v$.}
		\label{fig same_cluster}	
\end{center}
	
\end{figure}

\begin{proof}
	 The vertices $u,v$ both belong to the same cluster $C$ in phase $i$. Thus, there are paths in $H$ from $u$ and $v$ to the center $r_C$ of length $R_i$ (see Figure \ref{fig same_cluster} for an illustration). It follows that there is a path in $H$ between $u,v$ of length at most $2R_i$ in $H$. Since $i\leq \ell$, we have 
		$$	d_H(u,v) \leq 2R_\ell. $$
\end{proof}

\begin{lemma}\label{lemma stretch fa2}
	Let $(u,v)$ be an edge in the original graph $G$, such that $u,v$  belong to different clusters of $U^{(\ell)}$. Then,
	$$	d_H(u,v) \leq 2R_\ell+1. $$	
\end{lemma}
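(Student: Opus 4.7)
The plan is to exploit the interconnection step of the earliest phase in which one of the two endpoints gets frozen into $U^{(\ell)}$. Let $C_u \in U_{i_u}$ and $C_v \in U_{i_v}$ be the clusters of $U^{(\ell)}$ containing $u$ and $v$ respectively, and assume without loss of generality that $i_u \leq i_v$. At the start of phase $i_u$, the vertex $u$ belongs to $C_u \in \mathcal{P}_{i_u}$, while $v$ belongs to some cluster $\widetilde{C}_v \in \mathcal{P}_{i_u}$ that is either equal to $C_v$ (if $i_v = i_u$) or gets superclustered through later phases into $C_v$ (if $i_v > i_u$).

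First I would argue that $\widetilde{C}_v \neq C_u$. If they coincided, then since $C_u \in U_{i_u}$ is not superclustered during phase $i_u$, the cluster containing both $u$ and $v$ at phase $i_u$ would remain unchanged in every later phase, so $u$ and $v$ would share a cluster in $U^{(\ell)}$, contradicting the hypothesis. Therefore $\widetilde{C}_v$ is a genuine neighbor cluster of $u$ in the sense of the interconnection step of phase $i_u$, since the edge $(u,v) \in E$ witnesses that $u$ has a neighbor inside $\widetilde{C}_v$.

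Next, since $C_u \in U_{i_u}$, the interconnection step of phase $i_u$ is executed by every vertex of $C_u$, and in particular by $u$. Thus $u$ adds to $H$ an edge $(u, v'')$ for some $v'' \in \widetilde{C}_v$ (possibly $v'' = v$ itself). By Lemma \ref{lemma single tree} combined with \cref{eq ri bouns pi gen}, the spanner $H$ at that point already contains a spanning tree $T_{\widetilde{C}_v}$ of $\widetilde{C}_v$ whose radius from its center is at most $R_{i_u}$; hence $d_H(v'', v) \leq 2R_{i_u}$ via the center $r_{\widetilde{C}_v}$.

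Combining the two contributions yields
\[
d_H(u,v) \;\leq\; d_H(u, v'') + d_H(v'', v) \;\leq\; 1 + 2R_{i_u} \;\leq\; 2R_\ell + 1,
\]
where the last inequality uses $i_u \leq \ell$ and monotonicity of the sequence $R_i$. The only delicate point — and the one I would present carefully — is the argument that $\widetilde{C}_v \neq C_u$; once that is established, the rest is an immediate concatenation of an interconnection edge with a tree path inside $\widetilde{C}_v$.
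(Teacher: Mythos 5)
Your proof is correct and follows essentially the same route as the paper's: identify the phase $i_u$ at which $u$'s cluster is frozen, use the interconnection edge $u$ adds to $v$'s cluster at that phase, and close the gap with a tree path through the cluster center via Lemma \ref{lemma single tree}. The one place where you go beyond the paper is that you explicitly verify that $v$'s cluster at phase $i_u$ is distinct from $C_u$ (otherwise $u$ would add no interconnection edge toward it); the paper leaves this implicit, and your argument for it is correct.
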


\begin{figure}
	\begin{center}
		\includegraphics[scale=0.15]{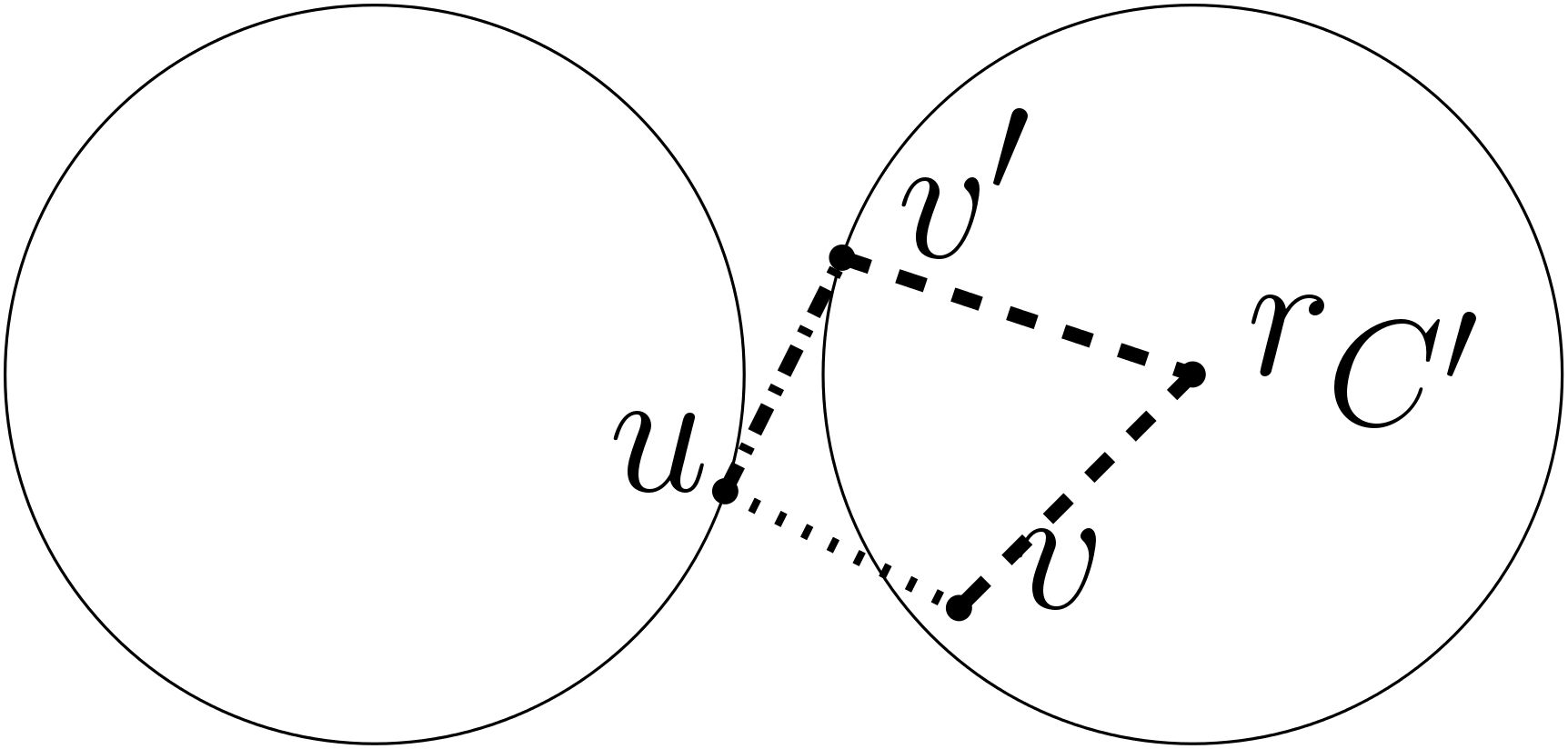}
		\caption{The path from $u$ to $v$ in $H$, if $u\in C$ and $v\in C'$, with $C\in U_i$ and $C'\in \mathcal{P}_i$. The dotted line between $u,v$ represents the original $(u,v)$ edge fro $G$. The dashed-dotted line represents the edge between $(u,v')$ that belongs to $G$ and to $H$. The dashed lines represent paths in $H$ from the cluster center $r_{C'}$ to the vertices $v,v'$. }
		\label{fig not same_cluster_fast}
	\end{center}
	
\end{figure}

\begin{proof}
	Let $C$ be the cluster such that $C\in U_i$ and $u\in C$. 
	Let $C''$ be the cluster such that $C''\in U_j$ and $v\in C''$. Assume w.l.o.g. that $i\leq j$. 	
	Let $C'$ be the cluster such that $C'\in \mathcal{P}_i$ and $v\in C'$. (Observe that if $i=j$, then $C''=C'$.) 
	
	Since  $C\in U_i$, in the interconnection step of phase $i$, the vertex $u$ added edges to all its neighboring clusters. Specifically, it added an edge to a vertex $v'\in C'$. 
	There are paths in $H$ from $v,v'$ to their cluster center $r_{C'}$ of length at most $R_i$. It follows that there is a path in $H$ between $u,v$ of length at most $2R_i+1$. Since $i\leq \ell$, we have 
	$$	d_H(u,v) \leq 2R_\ell+1. $$
\end{proof}

As a corollary to Lemmas \ref{lemma stretch fa1} and \ref{lemma stretch fa2}, we have:

\begin{corollary}\label{lemma stretch fa}
	For every edge $(u,v)\in E$ it holds that 
	\begin{equation}
		\label{eq stretch fast2}
		d_H(u,v) \leq 2R_\ell+1
	\end{equation}
\end{corollary}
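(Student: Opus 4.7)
The corollary is an immediate case analysis built on the two preceding lemmas together with the fact (observed just before Section~\ref{sec fa analysis of const}) that $U^{(\ell)} = \bigcup_{j=0}^{\ell} U_j$ is a partition of $V$. My plan is simply to fix an arbitrary edge $(u,v)\in E$, locate the unique clusters of $U^{(\ell)}$ containing $u$ and $v$, and split on whether these clusters coincide.

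In the first case, where $u$ and $v$ lie in the same cluster of $U^{(\ell)}$, Lemma~\ref{lemma stretch fa1} yields $d_H(u,v) \le 2R_\ell$, which is trivially at most $2R_\ell+1$. In the second case, where $u$ and $v$ lie in distinct clusters of $U^{(\ell)}$, Lemma~\ref{lemma stretch fa2} gives directly $d_H(u,v)\le 2R_\ell+1$. Taking the (weaker) common bound covers both cases and yields \eqref{eq stretch fast2}.

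There is essentially no obstacle here: all the work has already been done in Lemmas~\ref{lemma stretch fa1} and~\ref{lemma stretch fa2}, and the only thing that must be verified is that the case distinction is exhaustive, which is ensured by $U^{(\ell)}$ being a partition of $V$. Hence the proof is a two-line case split.
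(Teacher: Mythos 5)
Your proof is correct and matches the paper's reasoning exactly: the paper presents the corollary as an immediate consequence of Lemmas~\ref{lemma stretch fa1} and~\ref{lemma stretch fa2}, and you have simply spelled out the (exhaustive, thanks to $U^{(\ell)}$ partitioning $V$) two-case split and taken the weaker common bound. Nothing is missing.
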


We will now derive an explicit expression of the stretch. 
Recall that 
$\ell = \kappa-1$.
By \cref{eq explicit Ri fast,eq stretch fast2}, it follows that for every edge $(u,v)\in E$, we have

\begin{equation}
\begin{array}{clllll}

d_H(u,v)&\leq& 2R_\ell+1\\
&\leq& 2\left( 1/2\cdot \left(4{\log n}+1\right)^{\kappa-1}\right)+1\\
&\leq& \left(4{\log n}+1\right)^{\kappa-1}+1.

\end{array}
\end{equation}

Therefore, for every pair of vertices $x,y\in V$, the distance between $x,y$ in $H$ satisfies:

\begin{equation}\label{eq stretch fa}
d_H(x,y)\leq
\left(	 \left(4{\log n}+1\right)^{\kappa-1}+1\right)
\cdot d_G(x,y).		
\end{equation}

\subsubsection{Analysis of the Number of Edges}\label{sec fa analysis size}

In this section, we analyze the size of the spanner $H$. We will charge each edge in the spanner $H$ to a single vertex, and show that over all phases of the algorithm, a vertex is charged for at most $\nfrac$ edges. 

Observe that $H$ contains two types of edges, the \textit{superclustering edges}, and the \textit{interconnection edges}.

A superclustering edge that is added in a phase $i$ is an edge that connects a cluster $C_j\in \mathcal{P}_i\setminus Q_i$ to its predecessor in the BFS forest $F_i$. 
We will charge this edge on the center $r_{C_j}$ of the cluster $C_j$. For example, if for some $h\geq 1$, clusters $C_1,C_2,\dots,C_h$, centered at vertices $v_1,v_2,\dots,v_h$, respectively, are clustered into a supercluster rooted at a cluster $C$, then each of these centers $v_1,v_2,\dots,v_h$ is charged for a single edge. Note that the center $r_C$ of the cluster $C$ is not charged for any edges. Furthermore, since all clusters $C_1,C_2,\dots,C_h$ have been superclustered into the new supercluster centered around $C$, each cluster center $v_1,v_2,\dots,v_h$ will not be a cluster center in future, and will not be charged in this way ever again.
 See Figure \ref{fig superclustering charge} for an illustration. We summarize this argument in the following lemma.

\begin{lemma}
	\label{lemma fast one charge sup}
	Each vertex $v\in V$ is charged for at most one superclustering edge.
\end{lemma}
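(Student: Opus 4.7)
The plan is to prove that any vertex $v$ that is charged for a superclustering edge in some phase $i$ ceases to be a cluster center in all subsequent phases, and hence cannot be charged again.

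First, I would observe from the charging scheme that a superclustering edge can only be charged to the center $r_{C_j}$ of a cluster $C_j \in \mathcal{P}_i \setminus Q_i$ that was reached by the BFS exploration in $G'_i$ originating from some cluster $C \in Q_i$ (one edge per such $C_j$). Therefore, if a vertex $v$ is charged in phase $i$, then $v = r_{C_j}$ for some cluster $C_j \in \mathcal{P}_i \setminus Q_i$ that was superclustered into a supercluster $\widehat{C} \in \mathcal{P}_{i+1}$ centered around $r_C$ with $C \in Q_i$. In particular, since $C_j \neq C$, we have $v \neq r_C$, so $v$ is not the center of the new supercluster $\widehat{C}$.

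Second, I would show by induction on the phase index that once $v$ stops being a cluster center, it never becomes one again. The base case is trivial. For the inductive step, note that by construction every cluster center of a cluster in $\mathcal{P}_{i+1}$ is the center of some cluster in $Q_i \subseteq \mathcal{P}_i$; hence the set of cluster centers only shrinks as phases progress. Combining this with the first step, $v$ is not a cluster center in $\mathcal{P}_{i+1}$, and therefore not in $\mathcal{P}_{i'}$ for any $i' > i$. Since superclustering edges are only charged to cluster centers, $v$ cannot be charged for any superclustering edge in phases $i+1, i+2, \ldots, \ell-1$.

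The main (and essentially only) obstacle is articulating the second point cleanly: that the center of a new supercluster is always inherited from a $Q_i$-cluster center, so vertices that were ``absorbed'' (in the sense that their cluster was merged under a different root) are permanently removed from the pool of cluster centers. This follows directly from the construction in Section~\ref{sec superclustering fast}, where the center of a supercluster $\widehat{C} \in \mathcal{P}_{i+1}$ is taken to be $r_C$ for the unique $C \in Q_i$ that rooted the BFS tree containing $\widehat{C}$. Together with the observation that centers of $Q_i$-clusters are explicitly not charged in phase $i$, this completes the argument that each vertex is charged at most once across all phases.
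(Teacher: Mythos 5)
Your proof is correct and takes essentially the same approach as the paper: charge each superclustering edge to the center of the absorbed cluster, observe that such a vertex is not the center of the new supercluster, and conclude that since the set of cluster centers only shrinks (centers of $\mathcal{P}_{i+1}$ are exactly centers of $Q_i\subseteq\mathcal{P}_i$), a charged vertex can never be a cluster center — and hence never be charged — again. The paper presents the same reasoning informally in the paragraph preceding the lemma statement; your version just makes the monotonicity of the set of cluster centers explicit via induction.
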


An interconnection edge added in phase $i$ is an edge added by a vertex that belongs to a cluster $C\in U_i$. We will charge each interconnection edge to the vertex $v$ which added it to the spanner $H$ (see Figure \ref{fig fast intercon} for an illustration). Recall that $U^{(\ell)}$ is a partition of $V$. Hence for each vertex $v\in V$, there is exactly one phase $i\in [0,\ell]$ such that $v$ belongs to a cluster $C\in U_i$. We begin by analyzing the number of edges added to the spanner $H$ by all phases other than the concluding phase $\ell$.

\begin{figure}
	\begin{center}

		\includegraphics[scale=0.25]{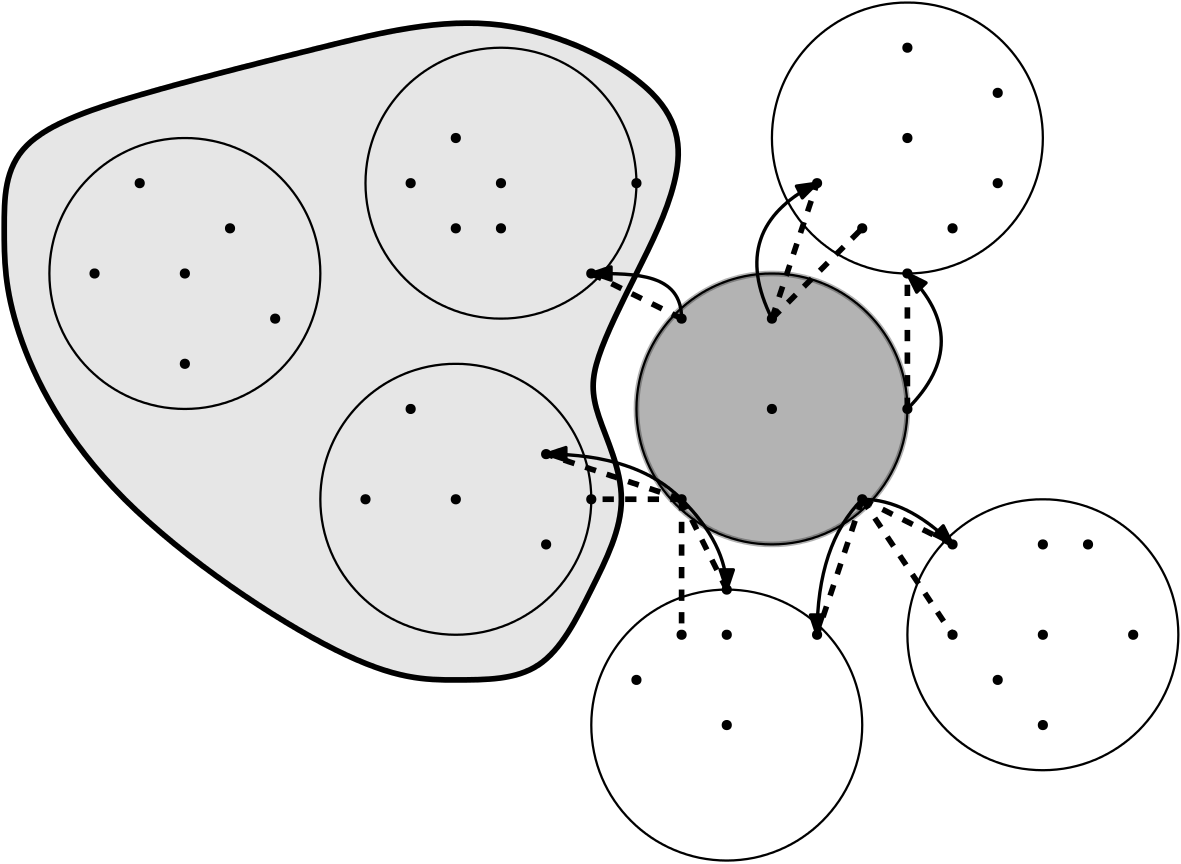}
		\caption{The interconnection edges. The dark gray circle represents a cluster $C\in U_i$. The other circles represent clusters $C'\in \mpi$. The light gray area represent a supercluster that was formed in this phase. The dotted lines represent edges in the graph $G$. The arrows represent the interconnection edges. Each vertex in the cluster $C$ is charged for all the interconnection edges that are incident to the vertex. }
		\label{fig fast intercon}	
\end{center}
\end{figure}

\begin{lemma}
	\label{lemma fast one charge int}
	Let $v$ be a vertex that belongs to a cluster $C\in U^{(\ell-1)}$. Then, the vertex $v\in V$ is charged for less than $\nko$ interconnection edges.
\end{lemma}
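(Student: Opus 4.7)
The plan is to use the contrapositive of Lemma \ref{lemma popular are clustered fst} to conclude non-popularity, and then apply the algorithmic description of the interconnection step to bound the number of edges added by $v$. First I would pin down the unique phase $i \in [0, \ell-1]$ such that $C \in U_i$, and observe that $v$ can only be charged for interconnection edges added during this single phase: by construction, only vertices in clusters of $U_j$ participate in the interconnection step of phase $j$, and once $v$'s cluster sits in some $U_j$ it is frozen and never reappears in $\mathcal{P}_{j'}$ for $j' > j$. In particular, in every phase $j < i$, $v$'s cluster was popular and got superclustered, so $v$ added no edges then.

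Next I would invoke the contrapositive of Lemma \ref{lemma popular are clustered fst}: since $C \in U_i$ means $C$ was not superclustered in phase $i$, and that lemma asserts that every popular cluster of $\mathcal{P}_i$ is superclustered, the cluster $C$ cannot be popular. By the definition of a popular cluster (one that contains a popular vertex), this implies that no vertex of $C$ is popular; in particular $v$ is not popular.

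Finally, by the description of the superclustering step in Section \ref{sec superclustering fast}, a vertex $v \in C$ is popular precisely when it has at least $\nko$ neighboring clusters of $\mathcal{P}_i$ other than $C$ itself. Hence $v$ has strictly fewer than $\nko$ neighboring clusters distinct from $C$, and since in the interconnection step $v$ adds at most one edge per neighboring cluster (arbitrarily choosing a single endpoint when several are available), $v$ is charged for strictly fewer than $\nko$ interconnection edges across all phases, as desired.

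The main obstacle is essentially cosmetic: the lemma is an immediate consequence of Lemma \ref{lemma popular are clustered fst} together with the local popularity test. The only subtle point is ensuring that $v$ participates in only one interconnection round, which rests on the convention (implicit in the definition of $\widehat{\mpi}$ and $U_i$) that clusters placed in $U_j$ are absent from $\mathcal{P}_{j+1}$, so no double-counting across phases occurs.
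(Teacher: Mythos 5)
Your proof is correct and follows essentially the same approach as the paper: it uses the contrapositive of Lemma~\ref{lemma popular are clustered fst} to conclude that $C$ (and hence $v$) is not popular, so $v$ has fewer than $\nko$ neighboring clusters and therefore adds fewer than $\nko$ interconnection edges, all in the single phase where $C \in U_i$. The extra detail you supply about $U^{(\ell)}$ being a partition of $V$ (hence $v$ participating in exactly one interconnection step) is a point the paper also makes, just more briefly.
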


\begin{proof}
	Let $i\in [0,\ell-1]$ be a phase such that $v$ belongs to a cluster $C\in U_i$. By Lemma \ref{lemma popular are clustered fst}, we have that $C$ is not popular. Then, by definition, $v$ has neighbors from less than $\nko$ other clusters. Thus in the interconnection step of phase $i$, the vertex $v$ adds less than $\nko$ edges to the spanner $H$ in phase $i$.

	Since there is only one single phase $i$ such that $v$ belongs to a cluster of $U_i$, we have that $v$ is charged for less than $\nko$ interconnection edges throughout all phases of the algorithm. 
\end{proof}

Recall that in phase $\ell$, we set $U_\ell \gets \mathcal{P}_\ell$. Each vertex $v$ that belongs to a cluster $C\in U_\ell$ now adds to $H$ interconnection edges from it to all of its neighboring clusters. It follows that each vertex $v$ that belongs to a cluster $C\in U_\ell$ is charged for at most $|U_\ell|-1$ edges.

We now provide an upper bound on the size of $\mathcal{P}_\ell= U_\ell$.

\begin{lemma}
	\label{lemma size pl fast}
	For all $i\in [0,\ell]$, the size of $\mathcal{P}_i$ is at most $n^{\frac{\kappa-i}{\kappa}}$.
\end{lemma}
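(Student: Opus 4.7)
The plan is to prove this by induction on $i$, using the ruling set property to argue that the popular clusters selected as supercluster centers are well-separated in the virtual graph $G'_i$, and combining this with the definition of ``popular'' to extract a packing lower bound.

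For the base case $i=0$, we have $|\mathcal{P}_0|=n=n^{\kappa/\kappa}$, matching the claim. For the inductive step, assume $|\mathcal{P}_i|\leq n^{(\kappa-i)/\kappa}$. Observe that each cluster of $\mathcal{P}_{i+1}$ is a supercluster centered around exactly one element of $Q_i$, so $|\mathcal{P}_{i+1}|=|Q_i|$. Thus it suffices to bound $|Q_i|$.

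The key step is a disjoint-packing argument inside $\mathcal{P}_i$. For $C\in Q_i$, let $N[C]$ denote $\{C\}$ together with all clusters $C''\in\mathcal{P}_i$ adjacent to $C$ in $G'_i$. Because $Q_i$ is a $(3,\delta)$-ruling set, any two $C,C'\in Q_i$ satisfy $d_{G'_i}(C,C')\geq 3$, so the sets $N[C]$ for $C\in Q_i$ are pairwise disjoint (if some $C''$ lay in both $N[C]$ and $N[C']$, the triangle inequality in $G'_i$ would give $d_{G'_i}(C,C')\leq 2$). On the other hand, every $C\in Q_i\subseteq W_i$ is popular, which by definition means $C$ contains a vertex with at least $n^{1/\kappa}-1$ neighboring clusters distinct from $C$; hence $|N[C]|\geq n^{1/\kappa}$.

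Combining these facts,
\begin{equation*}
|Q_i|\cdot n^{1/\kappa}\;\leq\;\sum_{C\in Q_i}|N[C]|\;\leq\;|\mathcal{P}_i|\;\leq\;n^{(\kappa-i)/\kappa},
\end{equation*}
which yields $|\mathcal{P}_{i+1}|=|Q_i|\leq n^{(\kappa-i-1)/\kappa}$, completing the induction. I do not expect a serious obstacle; the only subtle point is making sure the neighborhoods $N[C]$ are genuinely disjoint, which follows cleanly from the $3$-separation property of the ruling set, and that the ``popular'' threshold $n^{1/\kappa}-1$ excludes $C$ itself, so after adding $C$ back the full closed neighborhood has size at least $n^{1/\kappa}$ — exactly the factor needed to shave off one $n^{1/\kappa}$ factor per phase.
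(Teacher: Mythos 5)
Your proof is correct and takes essentially the same approach as the paper: both arguments proceed by induction, identify $|\mathcal{P}_{i+1}|=|Q_i|$, use the $3$-separation of $Q_i$ in $G'_i$ to conclude that the closed neighborhoods $\widehat{\Gamma}_{\mathcal{P}_i}(C)$ (your $N[C]$) are pairwise disjoint, and use the popularity threshold $n^{1/\kappa}-1$ to lower-bound each closed neighborhood by $n^{1/\kappa}$, giving the factor-$n^{1/\kappa}$ drop per phase.
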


\begin{proof}
	The proof is by induction on the index of the phase $i$.
	For $i=0$, we have $|\mathcal{P}_0|=n$, and so the claim is trivial. 
	
	We will assume that the claim holds for phase $i<\ell$, and prove it holds for $i+1$.
	Let $\widehat{C}$ be a cluster in $\mathcal{P}_{i+1}$.	
	We know that it was constructed in phase $i$, by the BFS exploration that originated in a cluster $C$. Define $\Gamma_{\mpi}(C)$ the set of all neighboring clusters of $C$ from $\mpi$. Define $\widehat{\Gamma}_{\mpi}(C) = \{C\}\cup \Gamma_{\mpi}(C)$ i.e., the set of neighbors of $C$ as well as $C$ itself. Since $Q_i\subseteq W_i$, we have that $C$ is popular, i.e., by definition, there exists a vertex $v\in C$ that has neighbors from at least $\nko$ different clusters. Hence, $|\widehat{\Gamma}_{\mpi}(C)|\geq \nk$. Moreover, since $Q_i$ is a $3$-separated in the graph $G'_i$, for every pair of distinct clusters $C,C'\in Q_i$, the sets $\widehat{\Gamma}_{\mpi}(C),\widehat{\Gamma}_{\mpi}(C')$ are disjoint. 
	Thus, by the induction hypothesis we have:

	\begin{equation}
	\label{eq bound size}
	|\mathcal{P}_{i+1}|\leq \frac{|\mathcal{P}_i|}{\nk} \leq \frac{n^{\frac{\kappa-i}{\kappa}}}{\nk} = n^{\frac{\kappa-(i+1)}{\kappa}} .
	\end{equation}
	
\end{proof}

By Lemma \ref{lemma size pl fast} we have $|\mathcal{P}_{\kappa-1}| \leq \nk$. Therefore, each vertex that adds edges in the interconnection step of the concluding phase $\kappa-1$, adds at most $\nk-1 $ edges.

It follows that each vertex $v\in V$ is charged for at most one superclustering edge, and at most $\nko$ interconnection edges by all phases of the algorithm \textit{combined}. Thus, the size of the spanner $H$ is bounded by: 
\begin{equation} 
\label{eq size fa}
|H| \leq \nfrac.
\end{equation}

\subsubsection{Analysis of the Running Time}\label{sec running time fast}

We begin by analyzing the running time of a single phase $i\in [0,\ell-1]$. 

\textbf{Superclustering.} The superclustering step of phase $i$ begins with detecting the popular clusters. Each center $r_C$ of a cluster $C$ broadcasts its ID to the entire cluster $C$. By Lemma \ref{lemma single tree} this requires at most $R_i$ time. Then, each vertex acquires the information regarding its neighboring clusters in one communication round. It now locally decides whether it is popular or not. This requires zero time. If a vertex decided that it is popular, it will send this message to its cluster center. 
This involves a convergecast in the spanning tree of the cluster, and by Lemma \ref{lemma single tree}, this requires at most $R_i$ time. Therefore, the detection of popular clusters requires $O(R_i)$ time. 

 The algorithm of Awerbuch et al. \cite{AwerbuchGLP89} computes a $(3,2{\log n})$-ruling set in a graph $G$ on $n$ vertices in $O({\log n})$ time. Since in the algorithm of \cite{AwerbuchGLP89} every vertex sends the same message on each round to all its neighbors, the algorithm applies to the Broadcast-CONGEST model. (See Appendix \ref{apend rs} for the details of implementation). Therefore, the algorithm can be simulated on a supergraph $G'_i$, where the overhead is the maximum diameter of a supervertex in a simulated supergraph. Therefore computing a $(3,2{\log n})$ requires $O(R_i\cdot {\log n})$ time. 
 
 The BFS exploration to depth $2{\log n}$ in $G'_i$ requires $O(R_i\cdot {\log n})$ time. 
 
 \textbf{Interconnection.} The interconnection requires only a single round, as each vertex knows whether it belongs to a cluster of $U_i$ or not. Therefore, each vertex that adds edges to its neighbors only needs to inform them that it has added this edge. 

 Therefore, the running time of a single phase  $i\in [0,\ell-1]$ is $O(R_i\cdot {\log n})$. For the concluding phase $\ell$, note that we do not form superclusters. At the beginning of the phase, each cluster center broadcasts its ID to all vertices in its cluster. This requires at most $R_\ell$ time. Then, each vertex sends a single message to its neighbors, and  decides locally which  edges to add to the spanner $H$. Thus the running time of the concluding phase is $O(R_\ell)$.

 It follows that the running time of the entire algorithm is 
 \begin{equation}
 \label{eq alg fat running time}
 \begin{array}{cllll}
 O\left(R_\ell+
 \sum^{\kappa-2}_{j=0} R_j\cdot {\log n}
 \right) &=&

 O\left(R_\ell+ 
 (1/2){\log n}\sum^{\kappa-2}_{j=0} (4{\log n}+1)^j
 \right) \\&=&
 O\left(\frac{1}{2}\cdot (4{\log n}+1)^{\kappa-1}+
 {\log n}\left[\frac{(4{\log n}+1)^{\kappa-1}-1}{(4{\log n}+1)-1}\right] 
 \right)\\&=&
 {O}\left({\log n} \right)^{\kappa-1}.
 \end{array}
 \end{equation}

As a corollary to \cref{eq stretch fa,eq size fa,eq alg fat running time}, we conclude:

\begin{corollary}
	\label{corollary polylog construction}
	For any parameter  $\kappa\geq 2$,  and any $n$-vertex graph $G=(V,E)$, our algorithm constructs a $\left((4{\log n}+1)^{\kappa-1}+1\right)$-spanner with at most 
	$ \nfrac $
	edges,
	in 
	$ {O}({\log n})^{\kappa-1} $
	deterministic time in the CONGEST model.
	
\end{corollary}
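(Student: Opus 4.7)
The proof of this corollary is essentially a bookkeeping step: the three parameters of interest (stretch, number of edges, running time) have already been bounded individually in the preceding analysis, so the plan is simply to assemble those bounds into a single statement.

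First I would invoke the stretch analysis of Section~\ref{sec fa analysis of const}. Lemmas~\ref{lemma stretch fa1} and~\ref{lemma stretch fa2}, together with the fact that $U^{(\ell)}$ forms a partition of $V$ with $\ell=\kappa-1$, yield Corollary~\ref{lemma stretch fa}, which bounds $d_H(u,v)\leq 2R_\ell+1$ for every edge $(u,v)\in E$. Combining this per-edge bound with the explicit formula for $R_\ell$ derived in \cref{eq explicit Ri fast} (which uses $\delta=2\log n$) gives the stretch bound $(4\log n+1)^{\kappa-1}+1$ for every pair of vertices, as spelled out in \cref{eq stretch fa}.

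Next I would assemble the size bound. Edges in $H$ are classified as either superclustering edges or interconnection edges, and each is charged to a specific vertex. By Lemma~\ref{lemma fast one charge sup}, each vertex is charged for at most one superclustering edge across all phases. By Lemma~\ref{lemma fast one charge int}, each vertex lying in a cluster of $U^{(\ell-1)}$ is charged for fewer than $n^{1/\kappa}-1$ interconnection edges. For the concluding phase $\ell=\kappa-1$, Lemma~\ref{lemma size pl fast} bounds $|\mathcal{P}_\ell|\leq n^{1/\kappa}$, so every vertex in a cluster of $U_\ell$ is also charged for at most $n^{1/\kappa}-1$ interconnection edges. Summing the charges gives $|H|\leq n^{1+1/\kappa}$, which is \cref{eq size fa}.

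Finally, I would import the running time analysis. The superclustering step of phase $i<\ell$ takes $O(R_i\cdot \log n)$ rounds, dominated by the simulation of the Awerbuch--Goldberg--Luby--Plotkin ruling set algorithm on the supergraph $G'_i$ and the subsequent BFS to depth $\delta=2\log n$; the interconnection step takes $O(1)$ rounds after an initial broadcast of length $R_i$. The concluding phase $\ell$ takes $O(R_\ell)$ rounds. Plugging in $R_i\leq \tfrac12(4\log n+1)^i$ from \cref{eq explicit Ri fast} and summing a geometric series yields the total running time $O(\log n)^{\kappa-1}$, as computed in \cref{eq alg fat running time}. There is no real obstacle here; the only thing to watch is that in each of the three inequalities the constants match the explicit form promised in the corollary statement, which is immediate from \cref{eq stretch fa,eq size fa,eq alg fat running time}.
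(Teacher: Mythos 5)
Your proposal is correct and follows the same route as the paper: the corollary is obtained by simply juxtaposing the stretch bound of \cref{eq stretch fa}, the size bound of \cref{eq size fa}, and the running time bound of \cref{eq alg fat running time}, each of which the paper has already established by exactly the lemmas you cite (the stretch lemmas plus the explicit formula for $R_\ell$, the charging lemmas plus the bound on $|\mathcal{P}_\ell|$, and the per-phase running time sum). There is nothing further to add; the paper itself treats this statement as an immediate consequence of those three displayed equations.
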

	
%
%
%
%
\section{A Construction of Sparse Spanners and Linear-Size Skeletons}\label{sec ultra-sparse}
 
	Hence we aim at a low polynomial time, i.e., $O(n^\rho)$ for an arbitrarily small constant $\rho>0$. This increased running time enables us to modify the algorithm described in Section \ref{sec polylog time spanner construction}, to obtain a spanner with much better parameters. Specifically, 
	we will show that 
	for any parameters $\kappa\geq 2$, and $\frac{1}{\kappa}\leq \rho<\frac{1}{2}$, and any $n$-vertex unweighted undirected graph $G=(V,E)$, our algorithm constructs a $t $-spanner with
	$ \nfrac $	edges,
	in 
	$ O\left(n^\rho\cdot t \right)$
	deterministic time in the CONGEST model, where $t =
	\left(\frac{4}{\rho}+1\right)^{ {\log \kappa\rho}+\frac{1}{\rho}+O(1)}
	$. 
	
	The stretch $t$ can be written as
	$O\left(\frac{1}{\rho}\right)^{\left(\frac{1}{\rho}\right)+O(1)}
	\cdot O\left(
	(\kappa\rho)^{\log (4/\rho+1)}
	\right)$, i.e., polynomial in $\kappa$ as long as $\rho= \Omega(1)$ is an arbitrarily small constant.
	In particular, by setting $\kappa = \omega({\log n})$, we obtain a $polylog(n)$-spanner of size $O(n)$ in deterministic \congest\ time $O(n^{\rho})$, for any arbitrarily small constant $\rho >0$.

	Section \ref{sec st the const} contains a concise description of the algorithm. The technical details of the construction are discussed in Section \ref{sec st the const}. Finally, the properties of the resulting spanner and the construction are analyzed in Section  \ref{sec st Analysis of the Construction}.
\subsection{The Construction}\label{sec st the const}

Our algorithm initializes $H$ as an empty spanner, and proceeds for $\ell+1$ phases. The parameter $\ell$ will be specified in the sequel. The input for each phase $i\in [0,\ell]$ is a collection of clusters $\mathcal{P}_i$ and a degree threshold parameter $deg_i$. The input for phase $0$ is the partition of $V$ into singleton clusters.

As in the construction from Section \ref{sec polylog time spanner construction} of this paper, each phase of the current construction also consists of a superclustering step and an interconnection step. In the superclustering step of each phase, clusters that have many neighboring clusters are merged into superclusters. In the interconnection step, clusters of low degree are interconnected to all their neighboring clusters.

Denote by $\Gamma_{\mathcal{P}_i}(C)$ the set of clusters $C'\in \mathcal{P}_i$ such that $C,C'$ are neighboring clusters. 
A cluster $C$ is said to be \emph{popular} if it has at least $deg_i$ neighboring clusters, i.e., if $|\Gamma_{\mathcal{P}_i}(C)| \geq deg_i$. Observe that this definition differs from the definition of popular clusters in Section \ref{sec polylog time spanner construction}, where we could not deliver multiple messages from a vertex to its cluster center. In the previous construction, each vertex had to decide for itself whether it is popular or not, and inform the cluster center of its decision. In the current algorithm, each cluster center can aggregate the information that resides within the vertices of its cluster, and make a decision.

Similarly to the construction described in Section \ref{sec polylog time spanner construction}, we aim to have the size of $\mathcal{P}_\ell$ at most $n^\rho$. (One can think of $\rho$ in the previous construction as equal to $1/\kappa$.) This will ensure that in the concluding phase $\ell$, even if every pair of clusters in $\mathcal{P}_\ell$ are interconnected by an edge, we will still not add too many edges to the spanner. Therefore, we will not construct superclusters in phase $\ell$.

Set the maximum index of a phase $\ell$ by 
$\ell= \floor*{{\log \kappa\rho}}+ \lceil\frac{\kappa+1}{\kappa\rho}\rceil-1$, as in \cite{ElkinN17,ElkinMatar}. (A similar approach was employed there for constructing \textit{near-additive} spanners. Here we employ it for building sparse multiplicative spanners.)
The execution of each phase $i$ of our algorithm requires at least $deg_i$ time, and we aim at running time of at most $n^\rho$. Therefore, we partition phases $0,1,\dots,\ell-1$ into two stages, the \emph{exponential growth stage} and the \emph{fixed growth stage}.
In the \textit{exponential growth stage}, that consists of phases 
$ 0,\ldots,i_0=\floor*{{\log (\kappa\rho)}}$, we set $deg_i = n^\frac{2^i}{\kappa}$. 
In the \textit{fixed growth stage}, which consists of phases $ i_0+1,\ldots,i_1 = i_0+\lceil\frac{\kappa+1}{\kappa\rho}\rceil-2=\ell-1$, we set $deg_i = n^\rho$. Observe that for every index $i$, we have $deg_i\leq n^\rho$.

The concluding phase $\ell$ is not a part of either of these two stages. We will show that the size of $P_\ell$ is small enough, such that we do not need to form superclusters in phase $\ell$. In phase $\ell$, we set $U_\ell\gets \mpp{\ell}$.

\subsubsection{Superclustering}\label{sec superclustering st}

In this section we describe the superclustering step of phase $i$, for all $i\in [0,\ell]$. 

First, for each cluster $C\in \mathcal{P}_i$, its cluster center $r_C$ broadcasts its ID to all vertices in $C$. 
We then proceed to detecting popular clusters. Each vertex $u\in C$ notifies all its neighbors that $u$ belongs to the cluster $C$. Each vertex $v$ that belongs to some cluster $C'\in \mathcal{P}_i$ now knows the IDs of all clusters it is adjacent to. For each such neighboring cluster $C'$ that $u$ is adjacent to, the vertex $u$ sends to its own cluster center $r_C$ the message
$\langle r_{C'},u\rangle $, with the ID of the cluster center of $C'$. 
For each cluster $C\in \mathcal{P}_i$, its cluster center $r_C$ now aggregates all the information regarding neighboring clusters that resides in vertices of the cluster $C$. Each vertex that receives a message in this procedure, saves the cluster ID and sends it only if it has not already sent a message with the same ID. In any case, each vertex will send at most $deg_i$ messages. If after sending $deg_i$ messages a vertex receives messages regarding new cluster centers, it will discard these messages, i.e., they will never be sent.

Finally, when all communication in the algorithm terminates, i.e., all messages have either reached their destination, or have been discarded, each cluster center $r_C$ that received messages regarding at least $deg_i$ cluster centers adds $C$ to the set $W_i$ that will be returned by the algorithm. The pseudocode of the algorithm is provided below.

\begin{algorithm}
	\caption{Popular Clusters Detection}
	\label{alg Popular Clusters Detection}
	\begin{algorithmic}[1]
		\Statex \textbf{Input:} graph $G=(V,E)$, a set of clusters $\mathcal{P}_i$, parameter $deg_i$
		\Statex \textbf{Output:} a set $W_i$.
		
		\State\label{state 1} Each vertex $x\in V$ initializes a list of centers $x.\mathcal{LC}$ it learnt about as an empty list.
		
		\State\label{state 3} Each vertex $x\in C$ sends to all its neighbors in $G$ the message $\langle r_C \rangle$.
		
		\State\label{state 4} Each vertex $y\in C'$ that received messages $\langle r_{C}\rangle $, for $r_{C} \neq r_{C'}$, sends to its predecessor in $ T_{C'}$ the message $\langle r_{C},y\rangle $ 
		\For {each message $m$ a vertex $z\in C'$ receives}
		\If { $m$ informs $z$ of a center it did not know and $|z.\mathcal{LC}|<deg_i$ }
		\State\label{state 6} $z$ saves the message $m$ in $z.\mathcal{LC}$.
		\State\label{state 7} $z$ forwards the message $m$ to $r_{C'}$.
		\EndIf
		\EndFor
		\State\label{state 8} Each cluster center $r_C$ that has learnt about at least $deg_i$ other centers adds $C$ to the set $W_i$. 
	\end{algorithmic}
\end{algorithm}

In the following lemma, we prove the correctness of Algorithm \ref{alg Popular Clusters Detection}.

\begin{lemma}\label{lemma wi is set of popular}
	The set returned by Algorithm \ref{alg Popular Clusters Detection} is the set of popular clusters $W_i$. 
	Moreover, when the algorithm terminates, each center $r_{C'}$ that did not add $C'$ to $W_i$, knows the identities of all centers of clusters $C\in \Gamma_{\mathcal{P}_i}(C')$, and for every such $C$ it knows a vertex $y\in C'$ such that there is an edge $(y,x)\in E$, where $x\in C$. 
\end{lemma}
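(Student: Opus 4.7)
The plan is to argue by induction on the in-cluster BFS tree $T_{C'}$ supplied by Lemma \ref{lemma single tree} that the list $v.\mathcal{LC}$ at each vertex $v \in C'$ reveals exactly which neighboring clusters of $\mathcal{P}_i$ were discovered in its subtree, as long as the cap $deg_i$ is not reached at $v$. Precisely, the key invariant I will prove is: for every $v \in C'$, if $|v.\mathcal{LC}| < deg_i$ at termination of Algorithm \ref{alg Popular Clusters Detection}, then $v.\mathcal{LC}$ contains a pair $\langle r_C, y \rangle$ for every $C \in \mathcal{P}_i \setminus \{C'\}$ such that some vertex in the $T_{C'}$-subtree of $v$ is adjacent in $G$ to a vertex of $C$, with $y \in C'$ serving as one such witness.

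The invariant is proved by induction on the depth of the subtree. At a leaf $v$, only the messages $v$ itself generates in Steps \ref{state 3}--\ref{state 4} can enter $v.\mathcal{LC}$, and since the cap is not reached, Step \ref{state 6} admits every new pair $\langle r_C, v\rangle$ corresponding to each neighboring cluster $C$ of $v$. For the inductive step at an internal vertex $v$, every child $c$ of $v$ in $T_{C'}$ forwards each element of $c.\mathcal{LC}$ exactly once toward $r_{C'}$ via Step \ref{state 7}, so all of $c.\mathcal{LC}$ eventually reaches $v$; since $v$'s cap is not reached, $v$ absorbs via Step \ref{state 6} every arriving pair that announces a new center, yielding $c.\mathcal{LC} \subseteq v.\mathcal{LC}$ and hence $|c.\mathcal{LC}| \leq |v.\mathcal{LC}| < deg_i$, so the inductive hypothesis applies to $c$. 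Combining the statements at all children of $v$ together with the pairs $v$ generated from its own $G$-neighbors establishes the invariant at $v$.

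Applying the invariant at $v = r_{C'}$ delivers both parts of the lemma. If $C'$ is popular, i.e., $|\Gamma_{\mathcal{P}_i}(C')| \geq deg_i$, then $|r_{C'}.\mathcal{LC}| < deg_i$ would force (via the invariant) a list of size $|\Gamma_{\mathcal{P}_i}(C')| \geq deg_i$, a contradiction; hence $|r_{C'}.\mathcal{LC}| \geq deg_i$ and Step \ref{state 8} adds $C'$ to $W_i$. Conversely, whenever Step \ref{state 8} adds $C'$ to $W_i$, it has seen $\geq deg_i$ distinct neighboring centers, so $C'$ really is popular. Finally, when $C' \notin W_i$ we have $|r_{C'}.\mathcal{LC}| < deg_i$, and the invariant produces exactly the moreover clause: for each $C \in \Gamma_{\mathcal{P}_i}(C')$ the saved pair $\langle r_C, y \rangle$ names a witness vertex $y \in C'$ with an edge to a vertex of $C$.

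The main obstacle is justifying that whenever the parent's cap is not saturated, no descendant's cap is either, so that the induction actually descends to the leaves. This is handled by the subset relation $c.\mathcal{LC} \subseteq v.\mathcal{LC}$ at every parent-child link in $T_{C'}$, which follows from Step \ref{state 7} (the child forwards its entire list toward the root) together with Step \ref{state 6} (an un-saturated parent absorbs each arriving new center). Once this is in place, the remainder of the argument is a direct read-off of the pseudocode and does not require any additional bookkeeping on congestion or timing.
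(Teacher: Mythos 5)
Your proof is correct and takes essentially the same approach as the paper: both arguments hinge on the observation that Algorithm \ref{alg Popular Clusters Detection} discards a message only at a vertex whose list is already saturated at $deg_i$ entries, and both conclude that at termination $r_{C'}$'s list records $\min\{|\Gamma_{\mathcal{P}_i}(C')|, deg_i\}$ distinct neighboring centers. The paper's write-up is terser---it asserts ``it follows'' without spelling out why a cascade of discards at intermediate tree vertices cannot hide a neighboring cluster from a root whose own cap is not reached---whereas your explicit induction on $T_{C'}$, via the monotone containment of known centers along each parent--child link, is exactly the missing justification. One small wording fix: the relation $c.\mathcal{LC} \subseteq v.\mathcal{LC}$ holds for the \emph{sets of cluster-center IDs}, not necessarily for the stored pairs themselves, since $v$ may have recorded $r_C$ paired with a different witness that arrived earlier; this does not affect your argument because the invariant only asks for \emph{some} valid witness per center.
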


\begin{proof}
	Consider a cluster $C\in \mathcal{P}_i$. After the execution of \cref{state 3} of the algorithm, for each cluster $C'\in \Gamma_{\mathcal{P}_i}(C)$, there is a vertex $y\in C$ such that $y$ knows the $ID$ of $r_{C'}$. 	
	Therefore, in \cref{state 4} of the algorithm, at least
	${\min \{|\Gamma_{\mathcal{P}_i}(C)|, deg_i\}}$ messages are sent from vertices in $C$ to $r_C$. Note that a message from a vertex $v$ to $r_C$ is discarded only if $v$ has already sent $deg_i$ messages to $r_C$. It follows that by \cref{state 8} the cluster center $r_C$ knows 	${\min \{|\Gamma_{\mathcal{P}_i}(C)|, deg_i\}}$ IDs of other cluster centers.

	Let $C\in W_i$, i.e., $C$ is popular and $|\Gamma_{\mathcal{P}_i}(C)|\geq deg_i$. So $r_C$ received at least $deg_i$ messages and joined $W_i$. 
	Let $v$ be a vertex that has joined $W_i$. Then $v$ is a center of a cluster in $\mathcal{P}_i$, and it receives messages regarding at least $deg_i$ neighboring clusters. Therefore it is popular, i.e., it is in $W_i$.

	For the second assertion of the lemma, consider a center $r_C'$ that did not join $W_i$. From the first assertion of the lemma, we conclude that it is not popular, i.e., $|\Gamma_{\mathcal{P}_i}(C')|< deg_i$. Therefore, 
	it received messages regarding at least ${\min \{|\Gamma_{\mathcal{P}_i}(C)|, deg_i\}}=|\Gamma_{\mathcal{P}_i}(C')|$ other cluster centers, i.e., $r_{C'}$ has received all messages that were sent to it, and thus it knows the identities of all centers of clusters in $\Gamma_{\mathcal{P}_i}(C')$. Moreover, each message that was sent to it contains both an $ID$ of a cluster center $r_{C}$, and an $ID$ of a vertex $y\in C'$ that received a message $m= \langle r_{C} \rangle$ in \cref{state 3}. Since $y$ received the message $m$ in \cref{state 3}, we conclude that there is a vertex $x\in C$ that sent $m$ to $y$, thus $(x,y)\in E$. 
\end{proof}

Next, we select a subset of the popular clusters to grow superclusters around. 
We construct the virtual popular cluster graph $G'_i=(V',E')$, where $V'=\mathcal{P}_i$ and $E'$ contains edges from each popular cluster to its neighboring clusters (whether they are popular or not). Define $\delta =2/\rho$. We simulate the algorithm of Schneider et al. and Kuhn et at. \cite{sew,KuhnMW18} on the graph $G'_i$ to construct the subset $Q_i$, a $(3,\delta)$-ruling set for $W_i$, $Q_i\subseteq W_i$. The algorithm requires $O(n^{\rho})$ time. Details of this simulation can be found in Appendix \ref{apend rs}.

A BFS exploration is then simulated on $G'_i$ from all supervertices of $Q_i$ to depth $\delta$, and creates superclusters as in Section \ref{sec superclustering fast}. Define the sequence $R_0,\dots,R_\ell$ as in \cref{eq define ri gen}.
This concludes the description of the superclustering step. 

The following lemmas summarize the properties of the new superclusters.

\begin{lemma}\label{lemma single tree st}
	Let $i\in [0,\ell]$ and let $C$ be a cluster of $\mathcal{P}_i$. 
	At the beginning of phase $i$, the spanner $H$ contains a spanning tree $T_{C}$
	such that for every vertex $u\in V(C)$, there is a path in $T_{{C}}$ from $u$ to the cluster center $r_{C}$, of length at most $R_{i}$, that contains only vertices from $V({C})$. 
\end{lemma}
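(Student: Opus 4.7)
The plan is to prove Lemma \ref{lemma single tree st} by an induction on the phase index $i$ that is essentially identical to the proof of Lemma \ref{lemma single tree}, since the superclustering step in Section \ref{sec superclustering st} still builds each new supercluster by a BFS exploration in $G'_i$ to depth $\delta$ from a ruling-set center, and the recurrence $R_{i+1}=(2\delta+1)R_i+\delta$ is defined identically. The differences from the previous construction (the use of a centralized popularity test via Algorithm \ref{alg Popular Clusters Detection}, the different degree sequence $deg_i$, and the use of the ruling-set algorithm of \cite{sew,KuhnMW18} instead of \cite{AwerbuchGLP89}) are irrelevant to the radius bound, since what matters for this lemma is only the shape of the BFS forest on $G'_i$ and the fact that each non-root supercluster contributes exactly one spanner edge to its predecessor.

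The base case $i=0$ is trivial, as every cluster in $\mathcal{P}_0$ is a singleton and $R_0=0$. For the inductive step, fix a cluster $\widehat{C}\in \mathcal{P}_{i+1}$. It is rooted at some $C\in Q_i$ and consists of all clusters $C'\in \mathcal{P}_i$ that were reached by the BFS exploration in $G'_i$ originating at $C$, up to depth $\delta$. Given $u\in V(\widehat{C})$, I split into two cases. If $u\in V(C)$, the induction hypothesis applied to $C$ gives a path in $H$ from $u$ to $r_C=r_{\widehat{C}}$ of length at most $R_i\leq R_{i+1}$, using only vertices of $V(C)\subseteq V(\widehat{C})$, and we are done.

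If $u\in V(C')$ for some other cluster $C'\neq C$ that was superclustered into $\widehat{C}$, I trace the unique path $P'=(C=C_0,C_1,\ldots,C_q=C')$ in the BFS forest $F'_i$, with $q\leq \delta$. For each $j\in [q]$, the superclustering step added to $H$ an edge $(y_{j-1},x_j)$ with $y_{j-1}\in C_{j-1}$ and $x_j\in C_j$, witnessing the BFS edge $(C_{j-1},C_j)$. Using the induction hypothesis on each $C_j$, there is a path of length at most $R_i$ in $H$ within $V(C_j)$ between any two vertices of $C_j$ (by concatenating two paths through $r_{C_j}$, but in fact the bound $2R_i$ is what I will use between the "entry" vertex $x_j$ and the "exit" vertex $y_j$, together with a single $R_i$ path from $r_C$ to the initial $y_0$). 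Concatenating, I obtain a path in $H$ from $r_C$ to $u$ of total length at most
\[
R_i \;+\; q \;+\; 2q\cdot R_i \;\leq\; R_i + \delta + 2\delta\cdot R_i \;=\; (2\delta+1)R_i+\delta \;=\; R_{i+1},
\]
and this path stays inside $V(\widehat{C})$ because each $V(C_j)\subseteq V(\widehat{C})$ and each superclustering edge $(y_{j-1},x_j)$ connects $C_{j-1}\subseteq \widehat{C}$ to $C_j\subseteq \widehat{C}$. Taking the union of such paths over all $u\in V(\widehat{C})$ yields the desired spanning tree $T_{\widehat{C}}$ of $\widehat{C}$ in $H$, completing the induction.

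There is no real obstacle: the argument is structurally the same as in Lemma \ref{lemma single tree}, and I could in principle just say "the proof is identical to that of Lemma \ref{lemma single tree}." The only point that warrants a sentence of justification is that, although the \emph{choice} of which clusters become superclusters (and of the ruling set $Q_i$) differs here, the geometric structure of a supercluster -- a BFS tree of depth $\leq\delta$ in $G'_i$ whose edges are realized in $H$ by one inter-cluster edge per non-root cluster -- is preserved, and that is all the radius recurrence depends on.
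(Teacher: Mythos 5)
Your proof is correct and takes the same approach the paper intends: the paper explicitly omits this proof as "analogous" to that of Lemma \ref{lemma single tree}, and your induction on $i$ is a faithful re-instantiation of that proof, with the key observation (correctly made) that the changes in how clusters are selected for superclustering do not alter the BFS-forest geometry or the realization of each BFS edge by a single spanner edge, which is all the recurrence $R_{i+1}=(2\delta+1)R_i+\delta$ depends on.
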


\begin{lemma}
	\label{lemma popular are clustered st} For every phase $i\in [0,\ell-1]$, all popular clusters in $\mathcal{P}_i$ are superclustered into clusters of $\mathcal{P}_{i+1}. $
\end{lemma}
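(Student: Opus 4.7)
The plan is to mimic the argument used to prove Lemma \ref{lemma popular are clustered fst}, adapting it to the new definition of popularity and to the ruling-set subroutine invoked here. First, I would fix an arbitrary popular cluster $C \in \mathcal{P}_i$. Using the correctness of Algorithm \ref{alg Popular Clusters Detection} (Lemma \ref{lemma wi is set of popular}), the center $r_C$ has learnt about at least $deg_i$ neighboring clusters, so $C$ is added to $W_i$.

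Next, I would invoke the property of the ruling-set simulation: $Q_i$ is constructed as a $(3,\delta)$-ruling set for $W_i$ in the virtual graph $G'_i$. Hence there exists a supervertex $\widehat{C} \in Q_i$ whose distance from $C$ in $G'_i$ is at most $\delta$. Since the superclustering step simulates a BFS from every supervertex of $Q_i$ to depth $\delta$ in $G'_i$, this BFS reaches $C$, so $C$ is absorbed into the supercluster grown around some cluster of $Q_i$, and thus becomes a subcluster of some cluster in $\mathcal{P}_{i+1}$.

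There is essentially no obstacle here: the argument is structurally identical to the proof of Lemma \ref{lemma popular are clustered fst}. The only differences that warrant a sentence of justification are (i) that popularity here refers to the cluster-center-side count returned by Algorithm \ref{alg Popular Clusters Detection} rather than to the existence of a locally popular vertex, which is handled by Lemma \ref{lemma wi is set of popular}; and (ii) that the ruling-set subroutine is now that of \cite{sew, KuhnMW18} rather than that of \cite{AwerbuchGLP89}, but it still yields a $(3,\delta)$-ruling set, which is the only property used. The BFS depth $\delta = 2/\rho$ matches the ruling-set parameter, so the distance bound $\le \delta$ given by the ruling-set property is precisely what lets the BFS reach $C$.
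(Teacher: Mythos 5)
Your proof is correct and takes the same route the paper takes: the paper explicitly states that the proof of this lemma is analogous to that of Lemma \ref{lemma popular are clustered fst} and omits it, and your argument is precisely that analogous proof, with the additional (and appropriate) observation that Lemma \ref{lemma wi is set of popular} is what certifies $C\in W_i$ under the new, cluster-center-side definition of popularity.
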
 

Their proofs are analogous to the proofs of Lemmas \ref{lemma popular are clustered fst} and \ref{lemma single tree}, respectively, and are thus omitted. Observe that Lemma \ref{lemma single tree st} implies that:

\begin{equation}
\label{eq ri bouns pi st}
Rad(\mathcal{P}_i)\leq R_i.
\end{equation}

\subsubsection{Interconnection}\label{sec intercon st}

We now discuss the details of the execution of the interconnection step of a phase $i\in [ 0, \ell]$.

Let $C\in U_i$. 
By Lemmas \ref{lemma popular are clustered st} and \ref{lemma wi is set of popular}, the cluster $C$ is not popular and for each neighboring cluster $C'\in \Gamma_{\mathcal{P}_i}(C)$, its center $r_C$ knows a vertex $v\in C$ such that there is an edge $(v,u)$ with $u\in C'$. 
In the interconnection step of phase $i$, for every cluster $C'\in \Gamma_{\mathcal{P}_i}(C)$, the cluster center $r_C$ will broadcasts to all vertices in $C$ the message $\langle r_{C'},v\rangle $. When the vertex $v$ receives this message, it will add the edge $(v,u)$ to the spanner $H$. If $v$ has multiple neighbors that belong to $C'$, it will arbitrarily choose one of them to add an edge to. This concludes the description of the interconnection step of phase $i$.

As in Section \ref{sec intercon fast}, denote by $U^{(i)}$ the union of all sets $U_0,U_1,\dots,U_i$, i.e., $U^{(i)} = \bigcup_{j=0}^iU_j$. Observe that like in the construction of Section \ref{sec polylog time spanner construction},   the set $U^{(\ell)}$ is a partition of $V$.

\subsection{Analysis of the Construction}\label{sec st Analysis of the Construction}

In this section, we analyze the parameters of the resulting spanner. Observe that Lemma \ref{eq explicit ri gen} and \cref{eq explicit Ri gen} are also applicable to the current construction. It follows that:

\begin{equation}
\label{eq explicit Ri st}
\begin{array}{cllll}
R_i 
&\leq &\frac{1}{2}\cdot (2\delta+1)^{i}

&= &\frac{1}{2}\cdot (4/\rho+1)^{i}.

\end{array}
\end{equation}

\subsubsection{Analysis of the Stretch}\label{sec st Analysis of the Stretch}
In this section, we analyze the stretch of the resulting spanner. Consider an edge $(u,v)\in E$. Since $U^{(\ell)}$ is a partition of $V$, both $u$ and $v$ belong to clusters of $U^{(\ell)}$. The following two lemmas bound the stretch of the edge $(u,v)$ in the spanner, in the case where $u,v$ belong to the same cluster in $U^{(\ell)}$, and in the case where $u,v$ belong to different clusters of $U^{(\ell)}$, respectively.

\begin{lemma}\label{lemma stretch st1}
	Let $(u,v)$ be an edge in the original graph $G$, such that $u,v$  belong to the same clusters of $U^{(\ell)}$. Then,
	$$	d_H(u,v) \leq 2R_\ell. $$
\end{lemma}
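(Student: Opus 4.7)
The plan is to mimic the proof of Lemma \ref{lemma stretch fa1} essentially verbatim, since the radius-bound infrastructure developed for the first construction carries over to the current one via Lemma \ref{lemma single tree st} and \cref{eq ri bouns pi st}. The key observation is that $U^{(\ell)} = \bigcup_{j=0}^{\ell} U_j$ partitions $V$, so if $u$ and $v$ lie in the same cluster of this partition, there is a unique index $i \in [0,\ell]$ and a unique cluster $C \in U_i \subseteq \mathcal{P}_i$ with $u,v \in C$.

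First I would invoke Lemma \ref{lemma single tree st} for the cluster $C \in \mathcal{P}_i$: at the beginning of phase $i$, the spanner $H$ already contains a spanning tree $T_C$ of $C$ in which every vertex is at distance at most $R_i$ from the cluster center $r_C$ via a path lying entirely inside $V(C)$. In particular, $d_H(u, r_C) \leq R_i$ and $d_H(v, r_C) \leq R_i$.

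Next I would concatenate the $u\leadsto r_C$ path with the $r_C \leadsto v$ path in $T_C \subseteq H$, obtaining a walk of length at most $2R_i$ in $H$ between $u$ and $v$. By the triangle inequality for the shortest-path metric in $H$, this yields $d_H(u,v) \leq 2R_i$. Since $i \leq \ell$ and the sequence $R_0, R_1, \ldots, R_\ell$ is monotonically non-decreasing (as is evident from $R_{i+1} = (2\delta+1)R_i + \delta$ with $\delta > 0$, cf.\ \cref{eq define ri gen}), we conclude $d_H(u,v) \leq 2R_i \leq 2R_\ell$, as required.

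There is no real obstacle here: the proof is a one-line deduction from Lemma \ref{lemma single tree st}, which itself has the same statement as the analogous Lemma \ref{lemma single tree} from the polylogarithmic-time construction. The edge $(u,v)$ itself plays no role in the argument beyond ensuring that $u$ and $v$ are vertices of $G$; what matters is purely the guarantee that the spanner $H$ contains a low-radius spanning tree of every cluster of $\mathcal{P}_i$, which is exactly what Lemma \ref{lemma single tree st} supplies.
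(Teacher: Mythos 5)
Your proof is correct and follows exactly the approach the paper intends — the paper omits this proof, stating it is "analogous to the proof of Lemma \ref{lemma stretch fa1}," and your argument (apply Lemma \ref{lemma single tree st} to get $u \leadsto r_C$ and $r_C \leadsto v$ paths of length at most $R_i$ each, concatenate, and use monotonicity of the sequence $R_i$) is precisely that analogous argument, spelled out with a bit more care.
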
	

The proof of the lemma is analogous to the proof of Lemma \ref{lemma stretch fa1}, thus it is omitted.

\begin{lemma}\label{lemma stretch st2}
		Let $(u,v)$ be an edge in the original graph $G$, such that $u,v$  belong to different clusters of $U^{(\ell)}$. Then,
		$$	d_H(u,v) \leq 4R_\ell+1. $$	
\end{lemma}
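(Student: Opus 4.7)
The plan is to mirror the proof of Lemma \ref{lemma stretch fa2}, but to account for the fact that in the current construction the interconnection edge between a cluster $C \in U_i$ and a neighboring cluster $C' \in \mathcal{P}_i$ is added by a single vertex chosen by $r_C$ (rather than by the endpoint $u$ of the original edge itself, as in Section \ref{sec intercon fast}). Since the endpoint $u \in C$ need not coincide with the vertex that issues the interconnection edge, we will have to route from $u$ through $r_C$ to that vertex, costing an extra $2R_i$ compared to the fast construction, which is exactly the source of the $4R_\ell + 1$ bound instead of $2R_\ell + 1$.

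More concretely, I would let $i$ be the unique phase with $u \in C \in U_i$, and $j$ the unique phase with $v \in C'' \in U_j$, and assume w.l.o.g.\ $i \leq j$. Let $C' \in \mathcal{P}_i$ be the cluster containing $v$ at phase $i$ (so $C' = C''$ when $i = j$). Since $u$ and $v$ lie in different clusters of $U^{(\ell)}$, the clusters $C$ and $C'$ are distinct; and since $(u,v) \in E$, they are neighbors in $\mathcal{P}_i$, so $C' \in \Gamma_{\mathcal{P}_i}(C)$. Because $C$ was not superclustered at phase $i$, Lemma \ref{lemma popular are clustered st} gives that $C$ is not popular, and then Lemma \ref{lemma wi is set of popular} ensures that $r_C$ knows both $r_{C'}$ and a vertex $w \in C$ that has an edge $(w, w') \in E$ with $w' \in C'$. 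The interconnection step of phase $i$ then broadcasts $\langle r_{C'}, w\rangle$ to all of $C$, at which point $w$ adds the edge $(w,w')$ to $H$.

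To finish, I would concatenate five subpaths in $H$: from $u$ up to $r_C$ in $T_C$, from $r_C$ down to $w$ in $T_C$, across the interconnection edge $(w,w')$, from $w'$ up to $r_{C'}$ in $T_{C'}$, and from $r_{C'}$ down to $v$ in $T_{C'}$. Each of the four in-cluster segments has length at most $R_i$ by Lemma \ref{lemma single tree st}, so
\begin{equation*}
d_H(u,v) \;\leq\; R_i + R_i + 1 + R_i + R_i \;=\; 4R_i + 1 \;\leq\; 4R_\ell + 1,
\end{equation*}
using $i \leq \ell$. This yields the desired bound.

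I do not expect any real obstacle here: the only subtlety, as noted, is to recognize that the interconnection edge is not emitted by $u$ itself but by the vertex $w$ designated by $r_C$, which forces the routing through $r_C$ and doubles the in-$C$ contribution compared to Lemma \ref{lemma stretch fa2}. All other ingredients (the tree-within-cluster bound of Lemma \ref{lemma single tree st} and the knowledge-at-center guarantee of Lemma \ref{lemma wi is set of popular}) are already in place.
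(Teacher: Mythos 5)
Your proof is correct and follows essentially the same route as the paper: both identify the interconnection edge $(w,w')$ (the paper calls it $(u',v')$) added by $C$ to $C'$ in phase $i$, and bound $d_H(u,v)$ by concatenating four in-cluster segments of length at most $R_i$ each (via Lemma \ref{lemma single tree st}) with that single edge, giving $4R_i+1 \leq 4R_\ell+1$. The extra exposition you provide on why the endpoint of the interconnection edge need not be $u$ itself is a nice clarification but doesn't change the argument.
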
	

\begin{figure}
	\begin{center}
		\includegraphics[scale=0.2]{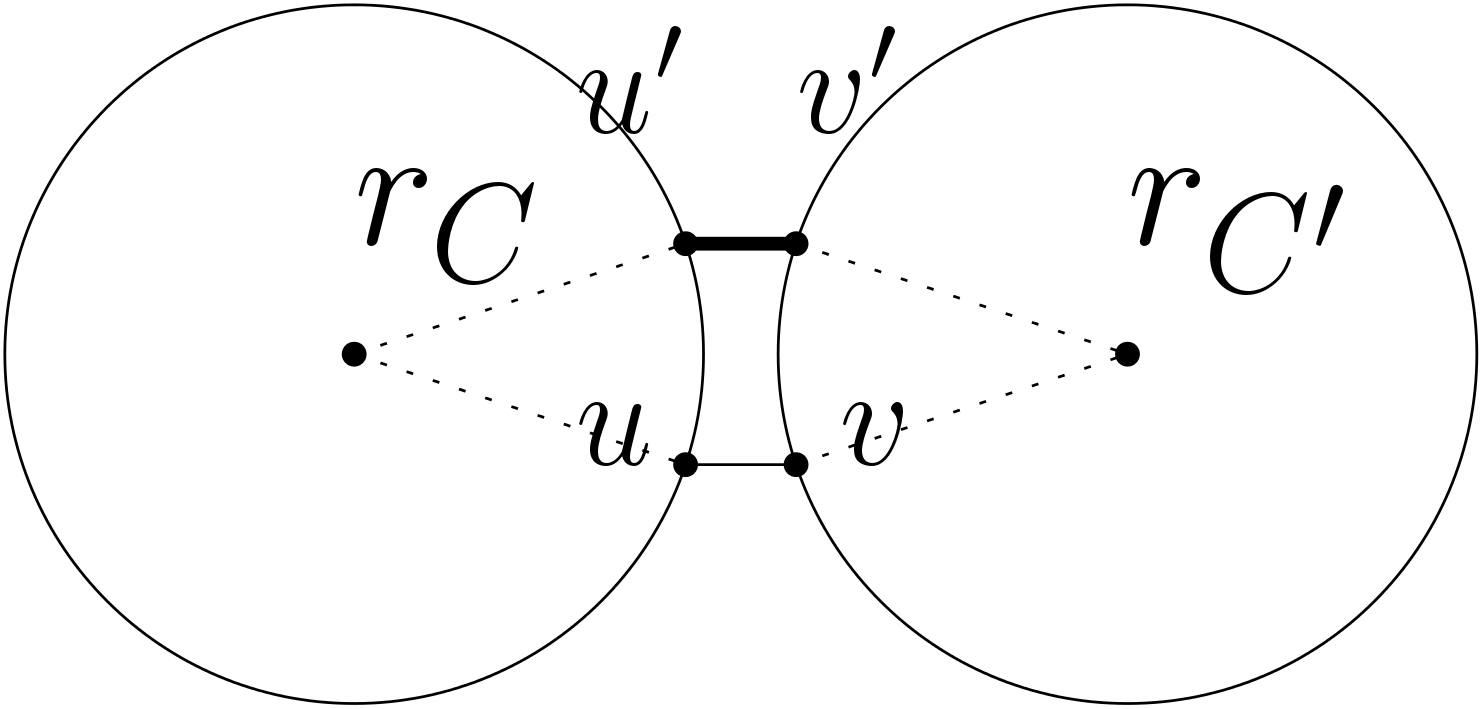}
		\caption{The path from $u$ to $v$ in $H$, if $u\in C$ and $v\in C'$. The line between $u,v$ represents the original $(u,v)$ edge from $G$. The dotted lines represent the paths in $H$ between the vertices $u,u',v,v'$ and their respective cluster centers. The thic line represents the edge $(u',v')$ that belongs to $G$ and to $H$.  }
		\label{fig not same_cluster}
	\end{center}
	
\end{figure}

\begin{proof}
	Let $C$ be the cluster such that $C\in U_i$ and $u\in C$. 
	Let $C''$ be the cluster such that $C''\in U_j$ and $v\in C''$. Assume w.l.o.g. that $i\leq j$. 	
	Let $C'$ be the cluster such that $C'\in \mathcal{P}_i$ and $v\in C'$. (Observe that if $i=j$, then $C''=C'$.) 
	
	Since  $C\in U_i$, in the interconnection step of phase $i$, the center $r_C$ of the cluster $C$ added edges to all its neighboring clusters. Specifically, an edge $(u',v')$ with $u'\in C$ and $v'\in C'$ was added to the spanner $H$. 
	
	By Lemma \ref{lemma single tree st}, there are paths in $H$ from $u,u',v,v'$ to their respective cluster centers $r_C,r_{C'}$ of length at most $R_i$. It follows that there is a path in $H$ between $u,v$ of length at most $4R_i+1$. Since $i\leq \ell$, we have 
	$$	d_H(u,v) \leq 4R_\ell+1. $$
\end{proof}

As a corollary to Lemmas \ref{lemma stretch st1} and \ref{lemma stretch st2}, we have:

\begin{corollary}\label{coro stretch st}
	For every edge $(u,v)\in E$ it holds that 
	\begin{equation}
	\label{eq stretch st}
	d_H(u,v) \leq 4R_\ell+1.
	\end{equation}
\end{corollary}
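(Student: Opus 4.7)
The plan is to observe that this corollary is essentially a case analysis that combines the two preceding lemmas. Since $U^{(\ell)}$ is a partition of $V$ (as noted in Section \ref{sec intercon st}), for any edge $(u,v)\in E$, both endpoints $u$ and $v$ belong to clusters of $U^{(\ell)}$, and exactly one of two mutually exclusive cases holds: either they belong to the same cluster of $U^{(\ell)}$ or to different clusters of $U^{(\ell)}$.

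First, I would invoke Lemma \ref{lemma stretch st1} in the same-cluster case to conclude $d_H(u,v) \leq 2R_\ell$, and then invoke Lemma \ref{lemma stretch st2} in the different-cluster case to conclude $d_H(u,v) \leq 4R_\ell+1$. Since $R_\ell \geq 0$, the bound $2R_\ell \leq 4R_\ell+1$ is immediate, so the weaker bound $4R_\ell+1$ holds in both cases. This yields the desired uniform bound over all edges of $E$.

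There is no substantive obstacle here — the work has already been done in Lemmas \ref{lemma stretch st1} and \ref{lemma stretch st2}, and the corollary simply packages them into a single inequality that applies to every edge. The only thing to be careful about is making the case split exhaustive, which is ensured by the fact that $U^{(\ell)}$ partitions $V$. Thus the proof will be a short one-paragraph argument that dispatches the two cases and takes the maximum of the two bounds.
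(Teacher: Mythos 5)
Your proposal is correct and follows exactly the paper's (implicit) reasoning: the corollary is just the case split over the two preceding lemmas, justified by $U^{(\ell)}$ partitioning $V$, with the weaker bound $4R_\ell+1$ subsuming $2R_\ell$. The paper gives no explicit proof beyond the phrase ``As a corollary to Lemmas \ref{lemma stretch st1} and \ref{lemma stretch st2}'', and your one-paragraph argument fills in precisely those details.
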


We will now derive an explicit expression of the stretch. 
Recall that 
$\ell= \floor*{{\log \kappa\rho}}+ \lceil\frac{\kappa+1}{\kappa\rho}\rceil$.
By \cref{eq explicit Ri st,eq stretch st}, it follows that for every edge $(u,v)\in E$, we have

\begin{equation}
\begin{array}{clllll}

d_H(u,v)&\leq& 4R_\ell+1\\
&\leq& 4(\frac{1}{2}\cdot (4/\rho+1)^{\ell})+1\\

&\leq& {2}\cdot (4/\rho+1)^{\ell} +1

\end{array}
\end{equation}

Therefore, for every pair of vertices $x,y\in V$, the distance between $x,y$ in $H$ satisfies:

\begin{equation}\label{eq stretch st final}
d_H(x,y)\leq
\left(	 {2}\cdot (4/\rho+1)^{\floor*{{\log \kappa\rho}}+ \lceil\frac{\kappa+1}{\kappa\rho}\rceil} +1\right)
\cdot d_G(x,y).		
\end{equation}

\subsubsection{Analysis of the Number of Edges}\label{sec st Analysis of the Number of Edges}

	In this section, we analyze the size of the spanner $H$. As in Section \ref{sec polylog time spanner construction}, we carefully examine the edges added by all phases of the algorithm, and charge every edge to a single vertex.

	Observe that $H$ contains two types of edges, the \textit{superclustering edges}, and the \textit{interconnection edges}. In this algorithm, we will charge each edge added in phase $i$ to a center of a cluster in $\mpi$. 
		
	As in Section \ref{sec polylog time spanner construction}, a superclustering edge that is added in a phase $i$ is an edge that connects a cluster $C\in \mathcal{P}_i\setminus Q_i$ to its predecessor in the BFS forest $F_i$. We will charge this edge to the center $r_{C}$ of the cluster $C$. See Section \ref{sec fa analysis size} for a detailed explanation and Figure \ref{fig superclustering charge} for an illustration.

An interconnection edge added in phase $i$ is an edge added to the spanner $H$ by a vertex that belongs to a cluster $C\in U_i$. In the current algorithm, a vertex $v$ will add an interconnection edge only if it received a message from its cluster center instructing it to do so. We will charge each interconnection edge to the cluster center that instructed $v$ to add the edge to the spanner $H$. For example, if a cluster $C\in U_i$ adds to the spanner $H$ edges $e_1,e_2,\dots,e_j$, for some $j$, in the interconnection step to clusters $C_1,C_2,\dots,C_j$, then the center $r_C$ of $C$ is charged for the $j$ edges ($e_1,e_2,\dots,e_j$). Observe that the cluster center $r_C$ will never be a center of a cluster in future phases. See Figure \ref{fig inter charge} for an illustration.

The following lemma shows essentially that a vertex $v\in V$ is charged at most once throughout the algorithm. It is either charged for a single edge when it is superclustered into another cluster, or it is a cluster center of a cluster in $U_i$, and then it is charged for less than $deg_i$ edges exactly once.

\begin{lemma}
	\label{lemma one charge}
	Each vertex $v\in V$ is charged for adding edges to $H$ in at most one phase of the algorithm.
\end{lemma}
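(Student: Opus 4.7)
The plan is to trace the charging scheme carefully and show that a vertex can only be charged while it is serving as a cluster center of some cluster in $\mathcal{P}_i$, and that once it is charged in phase $i$ it never serves as a cluster center in any later phase. Both kinds of edges introduced in phase $i$ (superclustering edges for clusters in $\mathcal{P}_i\setminus Q_i$, and interconnection edges for clusters in $U_i$) are charged to the center $r_C$ of the respective cluster $C\in\mathcal{P}_i$, so the first half of the statement follows from inspection of the charging rule. The real content is the second half.

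I would first record a structural observation about the hierarchy produced by the superclustering step: every cluster of $\mathcal{P}_{i+1}$ is obtained by growing a BFS supercluster in $G'_i$ anchored at some cluster of $Q_i\subseteq W_i$, and the center of this new supercluster in $\mathcal{P}_{i+1}$ is the center of its anchoring $Q_i$-cluster. In particular, the set of centers of clusters in $\mathcal{P}_{i+1}$ is contained in the set of centers of clusters in $Q_i$. Then I would split into the two possible charging events. If $v=r_C$ is charged for a superclustering edge in phase $i$, then $C\in\mathcal{P}_i\setminus Q_i$, so by the structural observation $v$ is not the center of any cluster in $\mathcal{P}_{i+1}$, and by a straightforward induction on $j\geq i+1$ it never becomes a center again. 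If $v=r_C$ is charged for an interconnection edge in phase $i$, then $C\in U_i$ is not superclustered and therefore does not appear in $\mathcal{P}_{i+1}$ at all, so $v$ is likewise not a center in any later phase.

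In either case, once $v$ incurs a charge in phase $i$, it is not a cluster center in any phase $j>i$ and is therefore not eligible to be charged again. I expect the main subtlety, such as it is, to be making the structural observation about center inheritance fully explicit, since the construction in Section \ref{sec superclustering st} implicitly identifies the center of each new supercluster with the center of its unique anchoring $Q_i$-cluster; once this is stated, the two cases above combine to give the lemma.
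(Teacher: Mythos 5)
Your proof is correct and follows essentially the same two-case structure as the paper's: condition on whether the first charge to $v=r_C$ is a superclustering edge or interconnection edges, and in each case argue $v$ is never again a cluster center. The one thing you add that the paper leaves implicit is the explicit "center inheritance" observation — that centers of $\mathcal{P}_{i+1}$ are precisely the centers of $Q_i$-clusters — which the paper compresses into the bare assertion "$v$ will not be a center of a cluster in future phases"; spelling it out is a reasonable bit of extra rigor, and your induction over $j>i$ to propagate non-centerhood forward is the same step the paper takes for granted.
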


\begin{proof}
	Let $v$ be a vertex, and let $i$ be the first phase of the algorithm in which $v$ is charged for an edge. Observe that in each phase of the algorithm, only cluster centers are charged for edges. Therefore, the vertex $v$ is a center of a cluster $C\in \mathcal{P}_i$ in phase $i$. 
	The vertex $v$ is charged either for a superclustering edge, or for interconnection edges. 
	
	\textbf{Case 1:} $v$ is charged for a superclustering edge. Then, the cluster $C$ was superclustered in phase $i$. By definition it does not belong to $U_i$, thus it is not charged for any interconnection edges in this phase.   
	Moreover, $v$ will not be a center of a cluster in future phases, thus it will not be charged for any edges in future phases.
	
	\textbf{Case 2:} $v$ is charged for interconnection edges.
	By definition, $C\in U_i$. 
	Observe that $v$ will not be a cluster center in future phases, and so it is charged for edges only in phase $i$. 
	
\end{proof}

Observe that by Lemma \ref{lemma one charge}, a vertex $v\in V$ is charged for at most one superclustering edge throughout the entire algorithm. Therefore, the superclustering steps of all phases $i\in [0,\ell-1]$ contribute at most $n$ edges to the spanner $H$, \textit{combined}.

We will now analyze the number of interconnection edges in the spanner $H$.
By Lemma \ref{lemma popular are clustered st}, if $C$ belongs to $U_i$, it has less than $deg_i$ neighboring clusters. Thus, its center will be charged for less than $deg_i$ edges. It is left to provide an upper bound on the size of the collections $U_i$, for all $i\in[0,\ell]$.
The superclustering step of phase $i$ partitions the set $\mathcal{P}_i$ into two disjoint sets: the set of clusters that are superclustered, and the set of clusters that are not superclustered, i.e., $U_i$.
In the following lemma, we use the size of $\mathcal{P}_{i+1}$ to upper bound the size of $U_i$.

\begin{lemma}\label{lemma bound size ui}
	For all phases $i\in[0,\ell-1]$, the size of the set $U_i$ is at most:
	$$|U_i| \leq |\mathcal{P}_i|-|Q_i|\cdot (deg_i+1) \leq  |\mathcal{P}_i|-|\mathcal{P}_{i+1}|\cdot (deg_i+1).$$
\end{lemma}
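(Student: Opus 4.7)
The plan is to bound $|U_i|$ from above by counting how many clusters of $\mathcal{P}_i$ get absorbed into superclusters during the superclustering step of phase $i$, and then to observe that this count is at least $|Q_i|\cdot(deg_i+1)$. Since the clusters of $\mathcal{P}_i$ are partitioned between those that get superclustered and those that do not (which form $U_i$), this directly yields the first inequality.

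First I would argue that every $C\in Q_i$ single-handedly draws at least $deg_i+1$ distinct clusters of $\mathcal{P}_i$ into its own supercluster: the cluster $C$ itself contributes $1$, and because $Q_i\subseteq W_i$, the cluster $C$ is popular, so in $G'_i$ it has at least $deg_i$ neighboring supervertices, each at $G'_i$-distance exactly $1$ from $C$. Since $\delta=2/\rho\geq 4\geq 1$, the BFS from $C$ reaches each such neighbor within depth $\delta$.

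The key (but short) step is to show that these $deg_i+1$ clusters are \emph{exclusively} absorbed into the supercluster around $C$ and not into the supercluster of some other $C''\in Q_i$. Here I use the fact that $Q_i$ is $3$-separated in $G'_i$: for any neighbor $C'$ of $C$ in $G'_i$ and any other $C''\in Q_i$, one has $d_{G'_i}(C',C'')\geq d_{G'_i}(C,C'')-1\geq 3-1=2>1=d_{G'_i}(C,C')$. Hence the simultaneous BFS from $Q_i$ reaches $C'$ strictly earlier from $C$ than from any $C''\neq C$ in $Q_i$, so $C'$ is attached to $C$'s tree in $F'_i$. Consequently the sets of clusters absorbed into superclusters around distinct elements of $Q_i$ are pairwise disjoint, and the total number of superclustered clusters of $\mathcal{P}_i$ is at least $|Q_i|\cdot(deg_i+1)$. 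This gives $|U_i|\leq |\mathcal{P}_i|-|Q_i|\cdot(deg_i+1)$.

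For the second inequality I would note that, by the construction of the superclustering step, $\mathcal{P}_{i+1}$ contains exactly one supercluster per element of $Q_i$, so $|\mathcal{P}_{i+1}|\leq |Q_i|$ (in fact equality holds). Substituting into the first inequality yields $|\mathcal{P}_i|-|Q_i|\cdot(deg_i+1)\leq |\mathcal{P}_i|-|\mathcal{P}_{i+1}|\cdot(deg_i+1)$, completing the lemma. The only nontrivial point is the disjointness argument above, which is a routine consequence of $3$-separation but is the single place where the choice $\delta\geq 1$ and the ruling-set parameters really enter.
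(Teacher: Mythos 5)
Your proof is correct and follows essentially the same route as the paper: use $3$-separation of $Q_i$ to show that each $C\in Q_i$ absorbs a disjoint set of at least $deg_i+1$ clusters (itself plus its $\geq deg_i$ neighbors), then use $|\mathcal{P}_{i+1}|=|Q_i|$ for the second inequality. You are slightly more careful than the paper in one spot: the paper only notes disjointness of the neighborhoods $\Gamma_{\mathcal{P}_i}(C)$, whereas you explicitly compare BFS distances ($1$ vs. $\geq 2$) to show that each neighbor of $C$ is claimed by $C$'s tree and not by another ruling-set cluster, which is the precise justification the paper's ``it follows'' glosses over.
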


\begin{proof}
	Let $i\in[0,\ell-1]$, and let $\mathcal{P}_i$ be the set of clusters in phase $i$.
	In the superclustering step of phase $i$, the clusters of $Q_i$ have been chosen to grow larger clusters around them. These new superclusters are the clusters of $\mathcal{P}_{i+1}$, and so $|\mathcal{P}_{i+1}|= |Q_i|$. 
	
	By Lemma \ref{lemma wi is set of popular}, the set $W_i$ is the set of popular clusters. Since $Q_i$ is a subset of $W_i$, we know that all clusters in $Q_i$ are popular. Also, the set $Q_i$ is a $(3,2/\rho)$-ruling set for the set $W_i$ in the popular-cluster graph $G'_i$. 
	Therefore, for every pair of distinct clusters $C,C'\in Q_i$, we have $\Gamma_{\mathcal{P}_i}(C)\cap \Gamma_{\mathcal{P}_i}(C') = \emptyset$. It follows that the BFS exploration that originated from each cluster $C\in Q_i$ detects at least the clusters in $\Gamma_{\mathcal{P}_i}(C)$. Hence each cluster $\widehat{C}\in \mathcal{P}_{i+1}$ contains at least $deg_i$ clusters from $\Gamma_{\mathcal{P}_i}(C)$, and the cluster $C$ itself. 
	
	Thus, the size of the set $U_i$ of clusters from $\mathcal{P}_i$ that have not been superclustered in phase $i$ is at most
	$$|U_i|\leq |\mathcal{P}_i|-|Q_{i}|\cdot (deg_i+1) = |\mathcal{P}_i|-|\mathcal{P}_{i+1}|\cdot (deg_i+1).$$
\end{proof}

Next, we bound the number of interconnection edges added by all phases other than the concluding phase. Note that interconnection edges in phase $i$ are added to the spanner $H$ by clusters in $U_i$.

\begin{lemma}
	\label{lemma bound interconect}
	The number of interconnection edges added to the spanner $H$ by all phases $0,1,\dots,\ell-1$ is at most 
	$$|\mathcal{P}_0|\cdot deg_0-|\mathcal{P}_{\ell}|\cdot (deg^2_{\ell-1}+deg_{\ell-1}).$$
\end{lemma}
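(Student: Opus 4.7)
The plan is to combine the per-phase bound of Lemma \ref{lemma bound size ui} with a telescoping argument driven by the degree sequence. First I would observe that in phase $i\in[0,\ell-1]$, each cluster $C\in U_i$ is by Lemma \ref{lemma popular are clustered st} not popular, so its center $r_C$ is charged for fewer than $deg_i$ interconnection edges. Hence the total number of interconnection edges added in phase $i$ is at most $|U_i|\cdot deg_i$, and multiplying the bound of Lemma \ref{lemma bound size ui} by $deg_i$ yields
\[
|U_i|\cdot deg_i \;\leq\; |\mathcal{P}_i|\cdot deg_i \;-\; |\mathcal{P}_{i+1}|\cdot (deg_i^2+deg_i).
\]

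The key observation I would then establish is that $deg_{i+1}\leq deg_i^2$ for every $i$. In the exponential growth stage this holds with equality since $deg_{i+1}=n^{2^{i+1}/\kappa}=deg_i^2$; in the fixed growth stage $deg_{i+1}=deg_i\leq deg_i^2$; and at the transition $i=i_0=\floor*{\log\kappa\rho}$ the defining inequality $2^{i_0+1}\geq \kappa\rho$ gives $deg_{i_0}^2 = n^{2^{i_0+1}/\kappa}\geq n^\rho = deg_{i_0+1}$. Using this, the displayed inequality weakens, for every $i\leq\ell-2$, to $|U_i|\cdot deg_i\leq |\mathcal{P}_i|\cdot deg_i-|\mathcal{P}_{i+1}|\cdot deg_{i+1}$, whose right-hand side telescopes when summed.

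Concretely, summing this weaker bound from $i=0$ to $\ell-2$ produces $|\mathcal{P}_0|\cdot deg_0-|\mathcal{P}_{\ell-1}|\cdot deg_{\ell-1}$. For the final phase $i=\ell-1$ I retain the original stronger estimate $|U_{\ell-1}|\cdot deg_{\ell-1}\leq |\mathcal{P}_{\ell-1}|\cdot deg_{\ell-1}-|\mathcal{P}_\ell|\cdot(deg_{\ell-1}^2+deg_{\ell-1})$. Adding the two gives exactly $|\mathcal{P}_0|\cdot deg_0-|\mathcal{P}_\ell|\cdot(deg_{\ell-1}^2+deg_{\ell-1})$, as claimed.

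The main subtlety is deciding to treat phase $\ell-1$ separately from the rest: it is precisely this last phase that supplies the $(deg_{\ell-1}^2+deg_{\ell-1})$ coefficient of $|\mathcal{P}_\ell|$ in the target bound, whereas for earlier phases the relaxation $deg_{i+1}\leq deg_i^2$ is both sufficient and what enables the middle terms to cancel cleanly. Apart from this bookkeeping, the only non-routine step is the arithmetic check at the exponential-to-fixed boundary, which is immediate from the definition of $i_0$.
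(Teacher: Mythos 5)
Your proof is correct and follows essentially the same path as the paper's: both start from the per-phase bound of Lemma \ref{lemma bound size ui}, both rely on the key fact $deg_{i+1}\leq deg_i^2$ (verified case by case over the two stages and the transition at $i_0$), and both obtain the target bound by telescoping the resulting sum. The only difference is cosmetic — you weaken each term except the last so the partial sum telescopes cleanly, while the paper expands the full sum, isolates the boundary terms, and observes that the residual $\sum_{i=1}^{\ell-1}|\mathcal{P}_i|(deg_i - deg_{i-1}^2 - deg_{i-1})$ is nonpositive; these are two bookkeepings of the same argument.
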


\begin{proof}
	
	We know that the number of edges added by the interconnection steps of each phase $i\in[0,\ell-1]$, is less than $|U_i|\cdot deg_i$.
	
	By Lemma \ref{lemma bound size ui}, the number of edges added by the interconnection steps of all phases $i\in [0,\ell-1]$ is smaller than:

	\begin{equation}
	\label{eq intercon overall contribution}
	\begin{array}{rlllll}
	&
	\sum_{i=0}^{\ell-1} |U_i|\cdot deg_i
	
\\

	\leq&
	\sum_{i=0}^{\ell-1}\left( |\mathcal{P}_i|-|\mathcal{P}_{i+1}|\cdot (deg_i+1)\right)\cdot deg_i
	\\
	
	\leq&
	|U_0|\cdot deg_0
	-|\mathcal{P}_{\ell}|\cdot (deg^2_{\ell-1}+ deg_{\ell-1})
	+
	\sum_{i=1}^{\ell-1} |\mathcal{P}_i|\cdot \left(
	deg_i-(deg_{i-1}^2+deg_i)
	\right).

	\end{array}
	\end{equation}
	
	Recall that in the exponential growth stage, i.e., phases $i\in [0,i_0]$, we have $deg_i = n^\frac{2^i}{\kappa}$. Also note that for the phase $i_0$ we have 

		\begin{equation*}
	\begin{array}{rlllllllll}
	deg_{i_0}
	&=& n^\frac{2^{\ize}}{\kappa} 
	&= &	n^\frac{2^{{\log \kappa\rho}-1}}{\kappa}
	&\geq& n^\frac{\kappa\rho}{2\kappa}
	&=& n^{\rho/2}.
	\end{array}
	\end{equation*}
	Recall also that for the fixed growth stage, i.e., phases $i\in [i_0+1,\ell-1]$ we have $deg_i = n^\rho$. 
	It follows that for every $i\in[0,\ell-1]$, we have that $deg_{i+1}\leq deg_i^2$. 
	Hence the number of edges added to the spanner by the interconnection steps of all phases $i\in[0,\ell-1]$ is at most

	\begin{equation}
	\label{eq intercon final contribution}
	\begin{array}{rlllll}
	|\mathcal{P}_0|\cdot deg_0-|\mathcal{P}_{\ell}|\cdot (deg^2_{\ell-1}+deg_{\ell-1}).
	\end{array}
	\end{equation}

\end{proof}
For the concluding phase $\ell$, we do not form superclusters and set $U_\ell=\mathcal{P}_\ell$. We will show now that the size of $U_\ell$ is at most $n^\rho$. The following three lemmas provide upper bounds on the size of the collections $\mathcal{P}_i$ in the exponential growth stage (along with the transition phase) and in the fixed growth stage, respectively.

\begin{lemma}
	\label{lemma bound size pi}
	For all $i\in [1,\ell]$ we have 
	$$|\mpi|\leq |P_{i-1}|\cdot (deg_{i-1})^{-1}.$$
\end{lemma}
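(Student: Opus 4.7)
The plan is to derive this bound as an immediate consequence of the work already done in Lemma \ref{lemma bound size ui}. That lemma establishes, for every $i-1 \in [0,\ell-1]$, that $|U_{i-1}| \leq |\mathcal{P}_{i-1}| - |\mathcal{P}_i|\cdot(deg_{i-1}+1)$. Since $|U_{i-1}| \geq 0$, we get $|\mathcal{P}_i|\cdot(deg_{i-1}+1) \leq |\mathcal{P}_{i-1}|$, and in particular $|\mathcal{P}_i| \leq |\mathcal{P}_{i-1}|/deg_{i-1}$. This is exactly the claimed inequality, and the range $i \in [1,\ell]$ matches the range $i-1 \in [0,\ell-1]$ covered by Lemma \ref{lemma bound size ui}.

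If one prefers a self-contained argument, the same bound can be re-derived from the structural properties of the superclustering step. First, observe that each cluster in $\mathcal{P}_i$ is grown in the superclustering step of phase $i-1$ around a unique cluster $C \in Q_{i-1}$, so $|\mathcal{P}_i| = |Q_{i-1}|$. Next, since $Q_{i-1} \subseteq W_{i-1}$, Lemma \ref{lemma wi is set of popular} (or directly the definition of $W_{i-1}$) guarantees that every $C \in Q_{i-1}$ is popular, i.e., $|\Gamma_{\mathcal{P}_{i-1}}(C)| \geq deg_{i-1}$. Finally, because $Q_{i-1}$ is $3$-separated in $G'_{i-1}$, for any two distinct $C,C' \in Q_{i-1}$ the closed neighborhoods $\widehat{\Gamma}_{\mathcal{P}_{i-1}}(C) = \{C\} \cup \Gamma_{\mathcal{P}_{i-1}}(C)$ and $\widehat{\Gamma}_{\mathcal{P}_{i-1}}(C')$ are disjoint subsets of $\mathcal{P}_{i-1}$. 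Summing gives $|\mathcal{P}_{i-1}| \geq |Q_{i-1}|\cdot(deg_{i-1}+1) \geq |\mathcal{P}_i|\cdot deg_{i-1}$, as desired.

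There is essentially no obstacle here: the claim is a weaker and cleaner form of the inequality that already appears inside Lemma \ref{lemma bound size ui}, and it is being extracted to be reused in the subsequent size analysis (in particular for bounding $|\mathcal{P}_\ell|$ by telescoping over the exponential and fixed growth stages). The only minor point to be careful about is that the statement is indexed by phase $i$ rather than $i-1$, but both the ruling-set disjointness and the popularity bound refer to phase $i-1$, which is exactly where Lemma \ref{lemma bound size ui} applies.
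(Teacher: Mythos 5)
Your second, self-contained argument is essentially the paper's own proof: identify $|\mathcal{P}_i| = |Q_{i-1}|$, use popularity of every $C \in Q_{i-1}$ to get $|\widehat{\Gamma}_{\mathcal{P}_{i-1}}(C)| \geq deg_{i-1}+1$, and use $3$-separation to get pairwise disjointness of the closed neighborhoods, then sum. Your first route --- reading the bound off of Lemma \ref{lemma bound size ui} by dropping the nonnegative $|U_{i-1}|$ term --- is a correct and slightly slicker alternative that the paper does not take: the paper re-derives the disjointness/popularity argument from scratch in both lemmas rather than observing that Lemma \ref{lemma bound size pi} is the $|U_{i-1}| \geq 0$ corollary of the inequality already established. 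The shortcut is sound (Lemma \ref{lemma bound size ui} is proved first and covers exactly the index range $i-1 \in [0,\ell-1]$, i.e.\ $i \in [1,\ell]$), and it avoids duplicating the counting argument; the paper's version just keeps each lemma self-contained. Either way the proof is correct.
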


\begin{proof}
	
	For every index $i\in [1,\ell-1]$, each cluster $C\in \mathcal{P}_{i+1}$ was constructed by the BFS exploration that originated from a cluster in $Q_{i}$. Therefore, we have $|\mathcal{P}_{i+1}|= |Q_{i}|$. The set $Q_{i}$ is a $(3,{2}/{\rho})$-ruling set for $W_i$ in $G'_i$. By Lemma \ref{lemma wi is set of popular}, all vertices in $W_i$ are popular cluster centers. Thus, for every $C\in W_i$, it holds that
	$ |\Gamma_{\mathcal{P}_i}(C)| \geq deg_i$. 
	Define $\widehat{\Gamma}_{\mpi}(C) = \{C\}\cup \Gamma_{\mpi}(C)$ i.e., the set of neighbors of $C$ as well as $C$ itself.
	Observe that $|\widehat{\Gamma}_{\mpi}(C)|\geq deg_i+1$

	The set $Q_i$ is $3$-separated, i.e., for every pair of distinct clusters $C,{C'}\in Q_i$ we have $ d_{G'_i}(C,{C'})\geq 3$.
	
	Thus, every pair of distinct clusters $C,C'\in Q_i$ we have $\widehat{\Gamma}_{\mathcal{P}_i}(C)\cap \widehat{\Gamma}_{\mathcal{P}_i}(C') = \emptyset$.
	
	It follows that:
	\begin{equation*}
	|P_{i+1}|\leq {|P_{i}|}\cdot ({deg_i+1})^{-1}
	\end{equation*}
\end{proof}

\begin{lemma}\label{inter_rt_lm2}

		For $i\in [0, i_0+1= \floor*{\log(\kappa\rho)}+1]$, we have 
		$$| {P}_i| \leq n^{1-\frac{2^i-1}{\kappa}}.$$

\end{lemma}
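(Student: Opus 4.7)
The plan is a straightforward induction on $i$ over the range $[0, i_0+1]$, using Lemma \ref{lemma bound size pi} together with the explicit formula $deg_i = n^{2^i/\kappa}$ that holds throughout the exponential growth stage.

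For the base case $i=0$, recall that $\mathcal{P}_0$ is the partition of $V$ into singletons, so $|\mathcal{P}_0| = n$, which matches $n^{1 - (2^0-1)/\kappa} = n^{1-0} = n$. For the inductive step, suppose the bound holds for some $i$ with $0 \leq i \leq i_0$. Since this entire range lies in the exponential growth stage (as $i_0 = \lfloor \log(\kappa\rho)\rfloor$), we have $deg_i = n^{2^i/\kappa}$. Applying Lemma \ref{lemma bound size pi} and the induction hypothesis,
\begin{equation*}
|\mathcal{P}_{i+1}| \;\leq\; |\mathcal{P}_i| \cdot (deg_i)^{-1} \;\leq\; n^{1 - \frac{2^i - 1}{\kappa}} \cdot n^{-\frac{2^i}{\kappa}} \;=\; n^{1 - \frac{2 \cdot 2^i - 1}{\kappa}} \;=\; n^{1 - \frac{2^{i+1} - 1}{\kappa}},
\end{equation*}
which is precisely the claimed bound for index $i+1$.

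There is really no obstacle here beyond making sure that the phase $i_0$ (the last phase of the exponential stage) still uses $deg_{i_0} = n^{2^{i_0}/\kappa}$, so that the inductive step carries through when bounding $|\mathcal{P}_{i_0+1}|$. This is exactly the definition of the exponential growth stage given in Section \ref{sec st the const}, so the induction extends cleanly all the way up to $i = i_0 + 1$. No additional analysis is needed beyond the routine arithmetic in the exponent.
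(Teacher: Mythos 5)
Your proposal is correct and matches the paper's proof essentially exactly: both proceed by induction using Lemma \ref{lemma bound size pi} together with $deg_i = n^{2^i/\kappa}$ for $i$ in the exponential growth stage, yielding $|\mathcal{P}_{i+1}| \leq n^{1-(2^i-1)/\kappa}\cdot n^{-2^i/\kappa} = n^{1-(2^{i+1}-1)/\kappa}$. Your explicit note that phase $i_0$ still lies in the exponential growth stage (so the step to $i_0+1$ goes through) is a sensible clarification but not a departure from the paper's argument.
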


\begin{lemma}\label{inter_rt_lm5}
	For $i_0+1\leq i \leq \ell$, it holds that
	$$| {P}_i| \quad \leq\quad n^{1+\frac{1}{\kappa}-(i-i_0)\rho}. $$
\end{lemma}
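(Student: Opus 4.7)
The plan is to prove this by induction on $i$, with base case $i = i_0+1$ and inductive step carrying us through the fixed growth stage. The two tools I need are Lemma \ref{lemma bound size pi} (the recursive size bound $|\mpi| \leq |\mpp{i-1}|/deg_{i-1}$) and Lemma \ref{inter_rt_lm2} (which handles the boundary case at the end of the exponential stage). The crucial observation is that in the fixed growth stage $deg_{i-1} = n^\rho$ for every $i \geq i_0+2$, so the size of $|\mpi|$ shrinks by a factor of exactly $n^\rho$ per phase.

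For the base case $i = i_0+1$ (where $i - i_0 = 1$), Lemma \ref{inter_rt_lm2} gives $|\mpp{i_0+1}| \leq n^{1-(2^{i_0+1}-1)/\kappa}$, so I need to check that the exponent is bounded by $1 + 1/\kappa - \rho$, i.e., that
\[
\rho \;\leq\; \frac{2^{i_0+1}}{\kappa}.
\]
Since $i_0 = \lfloor \log(\kappa\rho) \rfloor$, we have $2^{i_0+1} > \kappa\rho$, hence $2^{i_0+1}/\kappa > \rho$, and the inequality holds.

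For the inductive step, suppose the claim holds for index $i-1$, where $i-1 \in [i_0+1, \ell-1]$. Then $i-1$ lies in the fixed growth stage, so $deg_{i-1} = n^\rho$. Combining Lemma \ref{lemma bound size pi} with the inductive hypothesis gives
\[
|\mpi| \;\leq\; \frac{|\mpp{i-1}|}{deg_{i-1}} \;\leq\; \frac{n^{1+1/\kappa - (i-1-i_0)\rho}}{n^\rho} \;=\; n^{1+1/\kappa - (i-i_0)\rho},
\]
as required.

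The main (and only mildly delicate) obstacle is the base case: one has to unpack the definition of $i_0$ to confirm that the exponent produced by the last phase of the exponential stage is already at least as small as $1 + 1/\kappa - \rho$, so that the recurrence of the fixed stage picks up cleanly. The inductive step itself is a one-line computation, because the multiplicative shrinkage factor $deg_{i-1}$ is constant across the entire fixed growth stage.
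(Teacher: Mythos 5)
Your proof is correct and follows essentially the same route as the paper's: induction on $i$ using Lemma \ref{lemma bound size pi} for the step and Lemma \ref{inter_rt_lm2} (the paper packages this as the intermediate bound labeled \cref{Bound P i 0+1}) for the base case, with the same algebraic reduction of the base case to $2^{i_0+1}\geq\kappa\rho$.
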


The proofs of Lemmas \ref{inter_rt_lm2} and \ref{inter_rt_lm5} are analogous to the proof of Lemmas  2.10 and 2.11 in \cite{ElkinMatar}, and are therefore deferred to Appendix \ref{append some proofs}.

Recall that $\ell= \floor*{{\log \kappa\rho}}+ \lceil\frac{\kappa+1}{\kappa\rho}\rceil-1$, and that $i_0= \floor*{{\log \kappa\rho}}$. By Lemma \ref{inter_rt_lm5}, the size of $\mathcal{P}_\ell$ is bounded by: 

\begin{equation}
\label{eq bound pl}
|\mathcal{P}_{\ell}|\leq n^{1+\frac{1}{\kappa}-(\ell-i_0)\rho}=n^{1+\frac{1}{\kappa}-(\lceil\frac{\kappa+1}{\kappa\rho}\rceil-1)\rho}
\leq n^\rho.
\end{equation}

Observe that \cref{eq bound pl} implies that in the concluding phase $\ell$, we add at most $|\mathcal{P}_\ell| \cdot (n^\rho-1)$ interconnection edges to $H$. 
Recall that $|\mathcal{P}_0|=n$, and that all superclustering steps add at most $n$ edges combined. By \cref{eq intercon final contribution}, we obtain that the size of the spanner $H$ is bounded by: 
\begin{equation}
\label{eq size}
\begin{array}{cllllll}
|H| 
&<& n+|\mathcal{P}_0|\cdot deg_0-|\mathcal{P}_{\ell}|\cdot (deg^2_{\ell-1}+deg_{\ell-1})+|\mathcal{P}_\ell| \cdot 
(n^\rho-1)\\
&=& n +n\cdot \nk -|\mathcal{P}_{\ell}|\cdot (n^{2\rho}+n^\rho)+|\mathcal{P}_\ell| \cdot 
(n^\rho-1)\\
&\leq& \nfrac+n
\end{array}
\end{equation}

\subsubsection{Analysis of the Running Time}\label{sec st Analysis of the Running Time}

	We begin by analyzing the running time of a single phase $i$. 
	
	\textbf{Superclustering.} The superclustering step of phase $i$ begins with detecting the popular clusters. 
	By \cref{eq ri bouns pi gen}, downcasting $m$ messages from the center of the cluster $C$ to all vertices in $C$ requires $O(m+R_i)$ time. Also, upcasting $m$ messages from vertices in the cluster $C$ to the center of $C$ requires $O(m+R_i)$ time. By Lemma \ref{lemma single tree}, one can transfer data within different clusters in parallel, without having messages from two distinct clusters interfering with one another. 
	Therefore, Algorithm \ref{alg Popular Clusters Detection} requires $O(deg_i\cdot R_i)$ time. 
	
	The algorithm of \cite{sew,KuhnMW18} computes a $(3,2/\rho)$-ruling set in a graph $G$ on $x$ vertices in $O(n^\rho)$ time. Since in the algorithm of \cite{sew,KuhnMW18} every vertex sends the same message on each round to all its neighbors, the algorithm applies to the Broadcast-CONGEST model. Therefore, the algorithm can be simulated on a supergraph, where the overhead is the maximum diameter of a supervertex in a simulated supergraph. Therefore computing a $(3,2/\rho)$ requires $O(R_i\cdot n^\rho)$ time. 
	
	The BFS exploration to depth $2/\rho$ in $G'_i$ requires $O(R_i\cdot 2/\rho)$ time. 
	
	\textbf{Interconnection.} 
	In the interconnection step of phase $i$, each center $r_C$ of a cluster $C\in \mathcal{P}_i$ broadcasts less than $deg_i$ massages to all vertices in its cluster.
	Each vertex that receives a relevant massage, adds a single edge to the spanner $H$. Therefore, the interconnection step of phase $i$ requires $O(R_i\cdot deg_i)$ time. 
	
	Recall that for all $i\in [0,\ell]$, we have $deg_i\leq n^\rho$.
	Therefore, the running time of a single phase of the algorithm is $O(R_i\cdot n^\rho)$.

	By \ref{eq explicit Ri st}, and since $\ell ={\floor*{{\log \kappa\rho}}+ \lceil\frac{\kappa+1}{\kappa\rho}\rceil}-1$, we have that the running time of the entire algorithm is 
	\begin{equation}
	\label{eq alg st running time}
	\begin{array}{cllll}
	O\left(
	
	\sum^{\ell}_{j=0} O(R_j\cdot n^\rho)
	\right) =

	O\left(
	n^\rho\cdot
	\sum^{\ell}_{j=0} (\frac{1}{2}\cdot (4/\rho+1)^{j})
	\right) =\\

	O\left(
	n^\rho\cdot
	\left[
	\frac{(4/\rho+1)^{\ell+1}-1}
	{(4/\rho+1)-1}
	\right]
	\right) =
	
	O\left(
	n^\rho\cdot
	(4/\rho+1)^{\floor*{{\log \kappa\rho}}+ \lceil\frac{\kappa+1}{\kappa\rho}\rceil }
	
	\right) 
	\\

	\end{array}
	\end{equation}

As a corollary to \cref{eq stretch st final,eq size,eq alg st running time},

\begin{corollary}
	\label{corollary st construction}
	For any  parameters $\kappa\geq 2$, and $\frac{1}{\kappa}\leq \rho<\frac{1}{2}$, and any $n$-vertex unweighted undirected graph $G=(V,E)$, our algorithm constructs a $t $-spanner with
	$ \nfrac $	edges,
	in 
	$ O\left(n^\rho\cdot t \right)$
	deterministic time in the CONGEST model, where $t =
	\left(\frac{4}{\rho}+1\right)^{ {\log \kappa\rho}+\frac{1}{\rho}+O(1)}
	$. 
	
\end{corollary}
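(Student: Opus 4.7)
The plan is to assemble the corollary directly from the three analytical results already derived in this section: the per-edge stretch bound in \eqref{eq stretch st final}, the spanner-size bound in \eqref{eq size}, and the running-time bound in \eqref{eq alg st running time}. All three are stated in terms of the phase count $\ell = \floor*{\log \kappa\rho}+\lceil\frac{\kappa+1}{\kappa\rho}\rceil-1$, so the only real work is to repackage the awkward exponent $\ell$ into the cleaner form ${\log\kappa\rho}+\frac{1}{\rho}+O(1)$ that appears in the corollary statement, and then observe that $\nfrac+n = O(\nfrac)$.

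First I would simplify the exponent. Using the hypothesis $\rho\geq 1/\kappa$, the term $\lceil\frac{\kappa+1}{\kappa\rho}\rceil = \lceil\frac{1}{\rho}+\frac{1}{\kappa\rho}\rceil$ satisfies $\frac{1}{\kappa\rho}\leq 1$, so $\lceil\frac{\kappa+1}{\kappa\rho}\rceil \leq \frac{1}{\rho}+O(1)$. Combined with $\floor*{\log\kappa\rho}\leq \log\kappa\rho$, this gives $\ell \leq \log\kappa\rho + \frac{1}{\rho} + O(1)$. Substituting into \eqref{eq stretch st final}, the stretch is bounded by
\[
2\cdot\left(\frac{4}{\rho}+1\right)^{\ell}+1 \;\leq\; \left(\frac{4}{\rho}+1\right)^{\log\kappa\rho+\frac{1}{\rho}+O(1)} \;=\; t,
\]
where the additive $+1$ and leading factor of $2$ are absorbed into the $O(1)$ in the exponent (which is legitimate because $4/\rho+1 \geq 2$ for $\rho < 1/2$).

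Second, for the size, \eqref{eq size} gives $|H|\leq \nfrac+n$, which is $O(\nfrac)$ since $n \leq \nfrac$. For the running time, \eqref{eq alg st running time} yields $O(n^\rho \cdot (4/\rho+1)^{\ell+1})$, and the same exponent simplification as above turns this into $O(n^\rho\cdot t)$. Collecting all three bounds establishes the corollary.

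The main obstacle is essentially cosmetic: one must be careful that the constants swallowed into the $O(1)$ term in the exponent do not depend on $\kappa$ or $\rho$ (only on the fact that $\rho \in [1/\kappa, 1/2)$ so $4/\rho+1$ stays bounded below by a fixed constant). Beyond that, the corollary is purely a bookkeeping consolidation of Lemmas \ref{lemma single tree st}, \ref{lemma popular are clustered st}, \ref{lemma one charge}, and \ref{lemma bound interconect} together with the explicit bounds \eqref{eq explicit Ri st}, \eqref{eq bound pl}, and the stretch/size/time displays cited above.
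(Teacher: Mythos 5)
Your proposal is correct and follows the same route the paper takes: the paper states the corollary as an immediate consequence of \eqref{eq stretch st final}, \eqref{eq size}, and \eqref{eq alg st running time}, and you simply make explicit the bookkeeping that the paper leaves implicit. The exponent manipulation is the only nontrivial point and you handle it correctly: since $\rho\geq 1/\kappa$ gives $\kappa\rho\geq 1$ and hence $\frac{1}{\kappa\rho}\leq 1$, one gets $\ell+1 \leq \log\kappa\rho + \frac{1}{\rho}+O(1)$ with a universal additive constant, and since $\frac{4}{\rho}+1 > 9$ for $\rho<\frac{1}{2}$ the prefactor $2$ and additive $+1$ in the stretch bound cost at most one more unit in the exponent, which the $O(1)$ absorbs. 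You also correctly read the size bound $\nfrac+n$ as $O(\nfrac)$, which is the intended interpretation of the corollary's edge count.
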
	
	

	\bibliographystyle{alpha}
	\bibliography{cite}
	
\begin{appendix}

	\section{Some Proofs}
	\label{append some proofs}
	
	\textbf{Lemma \ref{inter_rt_lm2}} \textit{
		For $i\in [0, i_0+1= \floor*{\log(\kappa\rho)}+1]$, we have 
	$$| \mpi| \leq n^{1-\frac{2^i-1}{\kappa}}.$$
	}

	\begin{proof}
		We will prove the lemma by induction on the index of the phase $i$.

		For $i=0$, the right-hand side is equal to $n$ and therefore the claim is trivial.

		Assume the claim holds for $i\in [0,i_0]$  and prove it for $i+1$. By Lemma  \ref{lemma bound size pi},
		and by the induction hypothesis, we have that $|\mpp{i+1}|\leq \mpi\cdot (deg_{i})^{-1} $.
		
		\begin{equation}
		|\mpp{i+1}|\leq \mpi\cdot (deg_{i})^{-1} \leq n^{1-\frac{2^i-1}{\kappa}}\cdot n^\frac{-2^i}{\kappa} = 
		 n^{1-\frac{2^{i+1}-1}{\kappa}}
		\end{equation}
	\end{proof}
	
	Observe that Lemma \ref{inter_rt_lm2} implies that for $i_0+1$ we have:

	\begin{equation}\label{Bound P i 0+1}
	| \mpp{i_0+1}|	\leq n^{1-\frac{2^{\ize+1}-1}{\kappa}}\leq 
	 n^{1+\frac{1}{\kappa}-\frac{\kappa\rho}{\kappa}}\leq n^{1+\frac{1}{\kappa}-\rho}.
	\end{equation}

	\textbf{Lemma \ref{inter_rt_lm5}} \textit{	For $i_0+1\leq i \leq \ell$, it holds that
		$$| {P}_i| \quad \leq\quad n^{1+\frac{1}{\kappa}-(i-i_0)\rho}. $$}
	
	\begin{proof}

		The proof is by induction on the index of the phase $i$. 
		
		For the base case, by \cref{Bound P i 0+1}, we have 

		\begin{equation*}
		\begin{array}{ccccc}
			| \mpp{i_0+1}|	\leq n^{1+\frac{1}{\kappa}-\rho}= n^{1+\frac{1}{\kappa}-(i_0+1-i_0)\rho}.
		\end{array}
		\end{equation*}
		
		Assume the claim holds for $i\in [i_0+1,\ell-1]$  and prove it for $i+1$. By Lemma  \ref{lemma bound size pi},
		and the induction hypothesis we have that $|\mpp{i+1}|\leq \mpi\cdot (deg_{i})^{-1} $.
		
		Together with the induction hypothesis, this implies that 
		\begin{equation*}
		|\mpp{i+1}|\leq |\mpi|\cdot (deg_{i})^{-1}
		\leq n^{1+\frac{1}{\kappa}-(i-i_0)\rho}\cdot n^{-\rho} 
		= n^{1+\frac{1}{\kappa}-(i+1-i_0)\rho}
		\end{equation*}
	\end{proof}

\section{Ruling set}
\label{apend rs}

An algorithm for the construction of a $(c+1,cq)$-ruling set in the \congestmo\ is devised in \cite{sew,KuhnMW18}. The following theorem is derived from their result.

\begin{theorem}\label{sew_thorem}
	Given a graph $G=(V,E)$ in which each vertex $v\in V$ has a unique ID in the range $[n]$, a set $A\subseteq V$, and a parameters $q$, a $(3,2q)$-ruling set for $A$ can be built in $O(q \cdot n^\frac{1}{q})$ time, in the \congestmo. 
\end{theorem}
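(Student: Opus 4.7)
The plan is to derive this theorem as a direct specialization of the $(c+1, cq)$-ruling set construction of Schneider, Elkin, and Wattenhofer \cite{sew} and Kuhn, Mastrolilli, and Weiss \cite{KuhnMW18}. Their algorithms, as stated, build a $(c+1, cq)$-ruling set for any positive integers $c, q$ in $O(q \cdot n^{1/q})$ rounds of the \congest\ model. I would obtain the claimed theorem by simply instantiating $c = 2$, which yields the desired $(3, 2q)$-ruling parameters.

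First I would recall the high-level structure of the construction from \cite{sew, KuhnMW18}. Their algorithm performs $O(q)$ iterations. In each iteration, every still-active candidate vertex $v$ learns about candidates within roughly $n^{1/q}$ hops and decides locally (based on a comparison of IDs, encoded as a base-$n^{1/q}$ number processed digit by digit) whether to drop out, join the ruling set, or remain active for the next iteration. Each iteration costs $O(n^{1/q})$ rounds, and the overall correctness guarantees that the resulting set is $(c+1)$-separated while covering the input candidate set within distance $cq$.

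Second I would address the adaptation to a subset $A \subseteq V$ as required by the current theorem. In both \cite{sew, KuhnMW18}, the algorithm is naturally written so that only vertices in a prescribed set are eligible for the ruling set; other vertices in $V$ participate solely as relays for communication. In our invocation, we designate $A$ as the candidate set, and all vertices of $V$ are used as relays so that distances in the ruling-set conditions are taken with respect to $d_G$. Setting $c = 2$ then gives a set $Q \subseteq A$ that is $3$-separated in $G$ and such that every vertex of $A$ lies within distance $2q$ of some vertex of $Q$, i.e., a $(3, 2q)$-ruling set for $A$, with total round complexity $O(q \cdot n^{1/q})$.

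The main point to verify, and the only mildly delicate one, is that the algorithms in \cite{sew, KuhnMW18} do indeed expose the parameter $c$ (so that $c=2$ is a legal choice) and that the $O(q \cdot n^{1/q})$ running time does not hide a dependence on $c$ that would blow up at $c = 2$. Inspecting their analysis confirms that $c$ enters only in the separation and covering parameters and not in the round complexity, so the substitution is immediate and the theorem follows.
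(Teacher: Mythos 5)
Your proposal is correct and follows essentially the same route as the paper: the paper also presents this theorem as a direct derivation from the general $(c+1,cq)$-ruling set construction of \cite{sew,KuhnMW18}, introduces it with ``the following theorem is derived from their result,'' and supplies pseudocode (Algorithm~2) rather than a standalone proof, with the instantiation to the $(3,2q)$ case left implicit exactly as in your argument. Your additional remarks on restricting the candidate set to $A\subseteq V$ while letting all of $V$ serve as relays, and on the fact that fixing the separation parameter to the constant $2$ does not inflate the $O(q\cdot n^{1/q})$ bound, are sound sanity checks and are likewise not spelled out in the paper.
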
 

A pseudocode of the algorithm is provided below.

\begin{algorithm}
	\caption{CONGEST-Ruling-Set}\label{alg rs_congest}
	\begin{algorithmic}[1]

		\State $\boldsymbol{Input}$ Graph $G=(V,E)$, $|V|=n$, vertices $A$ with IDs from $[a,b]$, parameters $q,c$.
		\State $\boldsymbol{Output}$ Ruling set $RS$
		\If {$b-a\leq 1$}		\Comment{$A$ is a singleton}
		\State $RS \gets A $

		\Else
		\State $t \gets n^\frac{1}{c}$	\Comment{number of sets}
		\State $r \gets \frac{(b-a)}{t}$	\Comment{an upper bound on the number of elements in each set}
		
		\For {$l\gets 0 \ to \ t-1$ in parallel}
		\State $A_l \gets \{ v \in A\ | ID(v)\in [a+l\cdot r,a+(l+1)\cdot r -1] \} $
		\State $RS_l \gets \textit{CONGEST-Ruling-Set}(G,A_l,q,c) $
		\EndFor
		\State $RS\gets \emptyset$ 
		\For {$l=0\ to\ t-1$} 				\Comment{computed sequentially }
		\State $RS\gets RS\cup RS_l$
		\State all vertices $v\in RS_l$ broadcast a \textit{knock-out} message to depth $q$.
		\State each vertex $u\in RS_{l'} $ for $ l<l'<t$ that receives a \textit{knock-out} message (from some source, not necessarily in $RS_l$) is removed from $ RS_{l'} $ 
		\EndFor	
		\EndIf

	\end{algorithmic}
	
\end{algorithm}

Note that throughout the entire algorithm, there is only one type of message that is being sent on graph edges, i.e., the knock out message.

We simulate on the graph $G$ the execution of this algorithms to construct ruling sets in a virtual supergraph $G'$.

\begin{theorem}\label{theo simu}
	Let $G=(V,E)$ be an $n$-vertex graph 
	in which each vertex has a unique ID in the range $[n]$. Let $H\subseteq E$ be a set of edges. Let $G'=(V',E')$ be  a virtual supergraph, where:
	\begin{enumerate}
		\item Each supervertex is a set of vertices $C$ with a designated center $r_C$.
		\item For every cluster $C$, the set $H$ contains a tree $T_C$ that contains all and only vertices of $C$.
		\item For very cluster $C\in V'$, and for every vertex $u\in C$, the distance between $u$ and the center of the cluster $C$ is at most $R$. 
	\end{enumerate} 
	 Let  $A\subseteq V$, and let $q$ be a parameters.
	 Then,  a $(3,2q)$-ruling set for $A$ in $G'$ can be built in $O(R\cdot q \cdot n^\frac{1}{q})$ time, in the \congestmo. 
\end{theorem}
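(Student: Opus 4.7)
The plan is to simulate the CONGEST ruling-set algorithm of \cite{sew, KuhnMW18} (Algorithm \ref{alg rs_congest}) directly on $G'$, using $G$ as the physical communication network and the forest $\{T_C : C\in V'\}\subseteq H$ for in-cluster communication. The crucial structural observation, already emphasized immediately before the theorem statement, is that the only message type the algorithm ever sends is a uniform one-bit \emph{knock-out} message: in every round each vertex either broadcasts the same knock-out to all its neighbors or stays silent. Hence the algorithm fits the Broadcast-CONGEST model, and it suffices to show how to simulate a single round of Broadcast-CONGEST on $G'$ in $O(R)$ rounds of CONGEST on $G$.

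For such a single round I would execute the following three steps in parallel across all clusters.
(i) Every center $r_C$ downcasts the bit ``is $C$ currently sending a knock-out?'' along $T_C$ to all vertices of $C$; by condition~3 of the theorem this takes $O(R)$ rounds.
(ii) In one additional round every vertex $v$ forwards its bit to all of its $G$-neighbors; since by condition~2 each $T_C$ spans only the vertices of $C$, the trees are vertex-disjoint, so parallel downcasts in step~(i) do not interfere and step~(ii) sends at most one bit along each directed edge.
(iii) Inside each cluster $C'$ a convergecast along $T_{C'}$ aggregates, by logical OR, whether any $u\in C'$ received a knock-out, so $r_{C'}$ learns whether the corresponding supervertex of $G'$ received a knock-out in the simulated round; again this takes $O(R)$ rounds.

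All remaining decisions of Algorithm \ref{alg rs_congest}---membership of a supervertex in an ID-range subset $A_l$, insertion into $RS$, removal upon hearing a knock-out, and the recursion---depend only on the identity of the center and its local state, and can therefore be performed at $r_C$ without any additional communication. Correctness is then immediate from the correctness of Algorithm \ref{alg rs_congest} applied to $G'$, and since that algorithm terminates in $O(q\cdot n^{1/q})$ rounds (Theorem \ref{sew_thorem}), the total simulation cost in $G$ is $O(R\cdot q\cdot n^{1/q})$, as claimed. The only mildly delicate point I expect is verifying that the simultaneous in-cluster downcasts and convergecasts across all supervertices do not congest shared edges; this is resolved by vertex-disjointness of the trees $T_C$ together with the uniformity of the knock-out message, which lets a boundary vertex that would receive several identical bits simply treat them as one.
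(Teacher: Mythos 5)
Your proposal is correct and takes essentially the same route as the paper's proof: simulate one round of Algorithm~\ref{alg rs_congest} on $G'$ by a downcast from each center $r_C$ along $T_C$, a single forwarding round across $G$-edges, and a convergecast back to the centers, each costing $O(R)$ CONGEST rounds, yielding the claimed $O(R\cdot q\cdot n^{1/q})$ total. You spell out the vertex-disjointness of the trees $T_C$ and the OR-aggregation step more explicitly than the paper does, but the underlying argument is the same.
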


\begin{proof}
	We will show that each communication round of Algorithm \ref{alg rs_congest} can be simulated in $G$ by $2R$ communication rounds.

	Our simulation algorithm will run the Algorithm \ref{alg rs_congest}, step by step.
	Every supervertex (cluster) $C\in V'$ will be simulated by its cluster center $r_C$. The ID of the cluster is set to be the ID of the cluster center. For every cluster $C\in V'$ that needs to send a knock-out message, its cluster center $r_C$ will broadcast this message to all vertices in $C$. This requires $R$ time.  Then, each vertex $v\in C$ send a knock out message to all vertices $u\in C'$ such that $(u,v)\in E$. In other words, $v$ sends the message to all vertices that belong to neighboring clusters of $C$. Each vertex $u\in C'$, for some $C'\in V'$,  that received the knock out message will now deliver it to the center $r_{C'}$ of the cluster $C'$. 
	
\end{proof}
 
The algorithm of \cite{AwerbuchGLP89} for constructing ruling sets can be viewed as a special case of this algorithm, when $q= {\log n}$. 

\end{appendix}

\end{document}